\newtheorem{definition}{\textbf{Definition}}
\newtheorem{corollary}{\textbf{Corollary}}
\newtheorem{lemma}{\textbf{Lemma}}
\newtheorem{theorem}{\textbf{Theorem}}
\newtheorem{remark}{\textbf{Remark}}
\newtheorem{thm}{Theorem}
\newtheorem{defn}{Definition}
\newcommand{\nn}{\nonumber}
\newcommand{\cC}{\mathcal{C}}
\newcommand{\hw}{\hat{w}}
\newcommand{\hl}{\hat{l}}
\DeclareMathAlphabet{\matheuf}{U}{euf}{m}{n}
\begin{document}

\vspace*{-2cm}

\begin{center}
  \baselineskip 1.3ex {\Large \bf An Information Theoretic Approach to Secret Sharing
  \footnote{The material in this paper was presented
in part at the IEEE International Symposium on Information Theory (ISIT), Saint Petersburg, Russia, August
2011~\cite{Liang11isit} and at the IEEE 14th Workshop on Signal Processing Advances in Wireless Communications (SPAWC), Darmstadt, Spain, June 2013~\cite{zou2013}.} \footnote{The work of S. Zou and Y. Liang was supported by a National Science Foundation
CAREER Award under Grant CCF-10-26565 and by the National Science Foundation under Grants CCF-10-26566 and CNS-11-16932. The work of L. Lai was supported by a National Science Foundation CAREER Award under Grant CCF-13-18980 and the National Science Foundation under Grant CNS-13-21223. The work of S. Shamai (Shitz) was supported by the Israel Science Foundation (ISF), and the European Commission in the framework of the Network of Excellence in Wireless COMmunications NEWCOM$\#$.}
}
\\
 \vspace{0.15in} Shaofeng Zou, Yingbin Liang, Lifeng Lai, and Shlomo Shamai (Shitz)
\footnote{Shaofeng Zou and Yingbin Liang are with the Department of Electrical
Engineering and Computer Science, Syracuse University, Syracuse, NY 13244 USA (email: \{szou02,yliang06\}@syr.edu).
Lifeng Lai is with the Department of Electrical and Computer Engineering, Worcester Polytechnic Institute, Worcester, MA 01609 USA (email: llai@wpi.edu).
Shlomo Shamai (Shitz) is
with the Department of Electrical Engineering, Technion-Israel
Institute of Technology, Technion City, Haifa 32000 Israel (email: sshlomo@ee.technion.ac.il).}
\end{center}

\begin{abstract}
\baselineskip 3.5ex
A novel information theoretic approach is proposed to solve the secret sharing problem, in which a dealer distributes one or multiple secrets among a set of participants in such a manner that for each secret only qualified sets of users can recover this secret by pooling their shares together while non-qualified sets of users obtain no information about the secret even if they pool their shares together. While existing secret sharing systems (implicitly) assume that communications between the dealer and participants are noiseless, this paper takes a more practical assumption that the dealer delivers shares to the participants via a noisy broadcast channel. Thus, in contrast to the existing solutions that are mainly based on number theoretic tools, an information theoretic approach is proposed, which exploits the channel randomness during delivery of shares as additional resources to achieve secret sharing requirements. In this way, secret sharing problems can be reformulated as equivalent secure communication problems via wiretap channel models, and can hence be solved by employing powerful information theoretic security techniques. This approach is first developed for the classic secret sharing problem, in which only one secret is to be shared. This classic problem is shown to be equivalent to a communication problem over a compound wiretap channel. Thus, the lower and upper bounds on the secrecy capacity of the compound channel provide the corresponding bounds on the secret sharing rate, and the secrecy scheme designed for the compound channel provides the secret sharing schemes. The power of the approach is further demonstrated by a more general layered multi-secret sharing problem, which is shown to be equivalent to the degraded broadcast multiple-input multiple-output (MIMO) channel with layered decoding and  secrecy constraints. The secrecy capacity region for the degraded MIMO broadcast channel is characterized, which provides the secret sharing capacity region. Furthermore, the secure encoding scheme that achieves the secrecy capacity region provides an information theoretic scheme for sharing the secrets.
\end{abstract}

\baselineskip 3.ex
\section{Introduction}

In the classic secret sharing problem, a dealer intends to distribute a secret among a set of participants  such that only qualified sets of participants can correctly recover the secret by pooling their shares together, while the non-qualified set of participants obtain no information about the secret even if they pool their shares together. There are rich applications of secret sharing including construction of protocols and algorithms for secure multiparty computations \cite{Yao1982,Yao1986}, Byzatine agreement \cite{Rabin1983}, threshold cryptography \cite{Desmedt1992}, access control \cite{Naor1996}, attribute-based encryption \cite{Goyal2006}, and generalized oblivious transfer \cite{Shankar2008}. The existing solutions for the secret sharing problems are mainly based on the number theoretic tools, in which contents of the shares that the dealer delivers to the participants are specially designed in order to guarantee the secret sharing requirements. While such approaches work well for simple secret sharing problems, they are not readily extendable to more complicated problems, in which qualified and non-qualified sets become more complicated, and/or multiple secrets are simultaneously shared.


While in existing secret sharing systems, it is implicitly assumed that communications between the dealer and participants are noiseless, in this paper we take a more practical assumption that the dealer delivers shares to the participants via a noisy broadcast channel. Thus, we propose a novel information theoretic approach to solving secret sharing problems, which exploits the channel randomness during delivery of shares from dealers to participants as additional resources to achieve secret sharing requirements. In this way, secret sharing problems can be equivalently reformulated into secure communication problems via wiretap channel models studied in information theory \cite{Wyner1975,Csiszar1978} (see \cite{Liang2009} and \cite{Bloch2011} for more references of these studies). More importantly, such an approach is general enough to incorporate complex secret sharing requirements with multiple secrets and various structures of qualified and non-qualified sets into wiretap models. These secure communication problems can then be solved by employing powerful information theoretic security techniques, which thus enable the design of secret sharing strategies. Furthermore, such an approach also allows one to characterize the secret sharing capacity region based on the information theoretic characterization of the secrecy capacity region of physical layer wiretap models.

We first illustrate the basic idea of our approach using the classic secret sharing problem, in which the dealer wishes to distribute one secret to qualified sets of participants specified by an arbitrary access structure. We propose to achieve the secret sharing property via broadcasting the secret message from the dealer to all participants. To design a secret sharing scheme, we construct an equivalent compound wiretap channel~\cite{Liang:2009} by creating one virtual legitimate receiver for each qualified set (with the receiver's output including channel outputs from all participants in the qualified set) and creating one virtual eavesdropper for each non-qualified set (with the eavesdropper's output including channel outputs from all participants in the non-qualified set). Thus, a secure communication scheme for the equivalent compound wiretap channel guarantees that the qualified sets of participants can decode the secret while the non-qualified sets of participants obtain a negligible amount of information about the secret. By applying the results for the compound wiretap channel in \cite{Liang:2009}, we obtain a secret sharing scheme and characterize lower and upper bounds on the secret sharing rate. We also characterize the secret sharing capacity for some secret sharing scenarios.

We then demonstrate the power of our approach via a more complicated multi-secret sharing problem. Although we here solve an example problem, our goal is to demonstrate that this approach can be applied to more general secret sharing problems, not limited to the one that we present in this paper. We consider the following multi-secret sharing problem (see Fig.~\ref{secretmodel}), in which multiple secrets are intended for different corresponding qualified sets to recover in a layered fashion. More specifically, a dealer equipped with multiple antennas wishes to distribute $K$ secrets to $K$ participants by broadcasting via multiple antennas over a wireless channel. It is required that participant 1 recover the first secret, and as one more participant joins the group to share its output, one more secret should be recovered by the group, and this new secret should be kept secure from the smaller groups. Hence, if the first $k$ of the $K$ participants share their channel outputs, they can recover the first $k$ secrets, while all the remaining secrets are kept confidential to the group of the first $k$ participants.

The above problem involves sharing multiple secrets in a layered fashion, and can be very challenging to solve using the traditional number theoretic tools. To solve this problem, following the approach developed above, we design a virtual receiver for each qualified set (i.e., sharing group) of receivers, and thus convert this secret sharing problem to a communication problem over the degraded Gaussian multiple-input multiple-output (MIMO) broadcast channel with layered decoding and secrecy constraints. The requirements of the secret sharing problem is exactly mapped into the layered decoding and secrecy requirements for the communication problem. More specifically, in the communication model (see Fig.~\ref{model} for an illustration), the transmitter wishes to transmit $K$ messages to $K$ (virtual) receivers. The channel outputs at receivers naturally satisfy the degradedness condition, i.e., from receiver $K$ to receiver $1$, the quality of their channels gets worse gradually. It is required that receiver $k$ decodes one more message than receiver $k-1$ for $k=2,\ldots,K$, and this additional message should be kept secure from all receivers with worse channel outputs, i.e., with lower indices. Design of secure communication schemes and characterization of the secrecy capacity region for such a channel model readily provide secret sharing schemes and the corresponding secret sharing rate regions.

Characterizing the secrecy capacity region of the degraded Gaussian MIMO broadcast channel with layered decoding and secrecy constraints is a challenging information theoretic problem. Previously, there have been a number of broadcast models with layered decoding and secrecy requirements proposed and studied. In particular, \cite{Liu2010} studied a model (model 1) with two legitimate receivers and one eavesdropper. It is required that one message be decoded at both receivers and kept secure from the eavesdropper, and that the second message be decoded at one receiver and kept secure from the other receiver and the eavesdropper. \cite{Liu2010} studied one more model (model 2), in which the second message does not need to be kept secure from the other receiver. Both models were further generalized in \cite{ekrem2012} in that each receiver and the eavesdropper in the above model was replaced by a group of nodes. In \cite{Liu2010,ekrem2012}, the secrecy capacity region was established for the MIMO Gaussian channels.


The model that we study in this paper generalizes model 1 in \cite{Liu2010} to $K$ receivers. We characterize the secrecy capacity region for the discrete memoryless channel and the single-input-single-output (SISO) and MIMO Gaussian channels with layered decoding and secrecy constraints. Towards this end, the challenges lie in both the achievability and converse proofs due to the layered secrecy constraints on more than two receivers. Our achievable scheme is based on the stochastic encoding (i.e., binning) and heterogeneous superposition. The major challenge of achievability arises in the analysis of leakage rates, which is much more involved than the cases with two secure messages. Our contribution here lies in novel generalization of the analysis of the leakage rate provided in \cite{Gamal2012} for one secure message to multiple secure messages. On the other hand, due to the layered secrecy constraints, outer bounds on the secrecy rates should be developed in certain recursive structures for three or more consecutive layers of receivers. Consequently, techniques used in \cite{Liu2010,ekrem2012} for two layers cannot be readily applied here, although some properties on matrix manipulations are useful in our proof for the MIMO channel. Our main technical development in the converse proof lies in the construction of a series of covariance matrices representing input resources for layered messages such that the secrecy rates can be upper bounded as the desired recursive forms in terms of these covariance matrices.



Due to the equivalence of the multi-secret sharing problem via the multiple-input-single-output (MISO) broadcast channel and the secure communication problem over the degraded MIMO broadcast channel, we hence establish the secret sharing capacity region. Furthermore, the secure encoding scheme that achieves the secret capacity region provides an information theoretic scheme for sharing multiple secrets.

The rest of the paper is organized as follows. In Section \ref{sec:sec1}, we introduce the classical secret sharing problem and its connection to an\cite{ekrem2012}\cite{ekrem2011secrecy}  information theoretic secrecy model. We then use this secret sharing problem to illustrate our information theoretic approach for secret sharing. As further demonstration of our approach, in Sections \ref{sec:model} and \ref{sec:app}, we study an information theoretic model of the degraded broadcast channel with layered decoding and secrecy constraints, and apply the results obtained to study a layered multi-secret sharing problem. Finally, in Section \ref{sec:con}, we conclude our paper with further remarks.

\section{Secret Sharing and Information Theoretic Secrecy}\label{sec:sec1}
In this section, we first introduce the secret sharing problem, and then connect this problem to a model studied in information theoretic secrecy.

\subsection{Model of Secret Sharing}
We consider the following secret sharing problem. Suppose the system consists of
a dealer and a set of participants $\mathcal{P}=\{1,2,\cdots, K\}$.
The dealer has a secret $W$ (taken from a set $\mathcal{W}$) for the $K$ participants to share. We define an access structure $\mathcal{A}$, which contains all subsets of $\mathcal{P}$ that are required to recover the secret. Each set $A\in\mathcal{A}$ is called a qualified set. We assume that the access structure considered in this paper is monotone~\cite{Blundo:TIT:95}, that is if $A\in \mathcal{A}$ and $A\subseteq A_1$, then $A_1\in\mathcal{A}$. For the secret sharing scheme, we require that if the users in any qualified set $A\in\mathcal{A}$ gather their observations together, then they can recover the secret with a negligible error probability. We define a non-access structure $\mathcal{B}$ such that for any set $B\not\in \mathcal{B}$, we require that even if users in the set $B$ gather their observations together, they obtain negligible information about the secret message. In many applications, $\mathcal B=\mathcal A^C$.

In the existing secret sharing schemes, the communications between the dealer and participants are assumed to be noiseless as the classic secret sharing problem does not involve channel. In this paper,
we assume that the dealer and the participants are connected by a noisy broadcast channel, as shown in Figure~\ref{fig:channel}. If the dealer transmits $X^{n}$, participant $k$ receives $Y_k^n$, and the relationship among the input and outputs is characterized by the transition probability distribution
\begin{eqnarray}
P_{Y_1^n\cdots Y_K^n|X^n}(y_1^n\cdots y_K^n|x^n)=\prod\limits_{k=1}^K\prod\limits_{i=1}^n P_{Y_k|X}(y_k(i)|x(i)),\label{eq:channel}
\end{eqnarray}
where $x(i)$ is taken from a finite set $\mathcal{X}$, and $y_k(i)$ is taken from a finite set $\mathcal{Y}_k$ for $k=1,\ldots,K$.

\begin{figure}[thb]
\centering
\includegraphics[width=0.5 \textwidth]{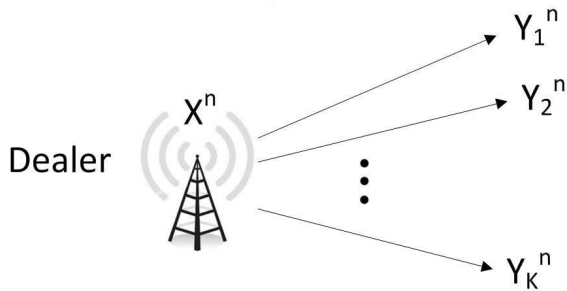}
\caption{The noisy broadcast channel.} \label{fig:channel}
\end{figure}

\begin{defn}
A $(2^{nR},n)$ code for secret sharing over the noisy broadcast channel consists of the following:
\begin{enumerate}
\item a secret set: $\mathcal{W}=\{1,2,\cdots,2^{nR}\}$ with the secret $W$ uniformly distributed over $\mathcal{W}$;
\item an encoder $f$: $\mathcal{W}\rightarrow \mathcal{X}^n$ mapping each secret message $w\in\mathcal{W}$ to a codeword $x^n\in\mathcal{X}^n$;
\item a decoder $g_{A}$ for each qualified set $A\in \mathcal{A}$: $\{\mathcal{Y}_k^n:k\in A\}\rightarrow \mathcal{W}^{A}$.
\end{enumerate}
\end{defn}
The average block error probability for set $A$ is
\begin{eqnarray}
P_{e,A}^n=\frac{1}{2^{nR}}\sum\limits_{w=1}^{2^{nR}}\text{Pr}\{\hat{w}^{A}\neq w|w\text{ was sent}\}.
\end{eqnarray}

\begin{defn}
A secret sharing rate $R$ is said to be achievable if there exists a sequence of $(2^{nR},n)$ codes such that for any $\epsilon>0$ the following two conditions are satisfied:
 \begin{enumerate}
 \item $\forall A\in \mathcal{A}$, we have
 \begin{eqnarray}
 P_{e,A}^n\leq \epsilon;\label{eq:error}
 \end{eqnarray}
 \item $\forall A\in \mathcal{B}$, we have
 \begin{eqnarray}
 \frac{1}{n}I(W;\{Y^n_{k}:k\in B\})\leq \epsilon.\label{eq:leak}
 \end{eqnarray}
 \end{enumerate}
\end{defn}
The above condition~\eqref{eq:error} requires that the decoding error probability at any qualified set should be small, while condition~\eqref{eq:leak} requires that any non-qualified set gains negligible amount of information about the secret even if users in this set share their observations together.

The {\em secret sharing capacity} is defined to be the maximal achievable secret sharing rate.


\subsection{Connection to Wiretap Channels}
In the remaining part of this section, we study the above secret sharing problem with non-access structure $\mathcal B=\mathcal A^C$.
Our main idea is to connect this problem to a communication problem over an equivalent compound wiretap channel ~\cite{Liang:2009} (i.e., the Wyner's wiretap channel model with multiple legitimate receivers and multiple eavesdroppers). Thus, secure coding design for the equivalent compound wiretap channel can be applied to achieve secret sharing. 

More specifically, for each set $A\in \mathcal{A}$, we construct a virtual legitimate receiver $V_{A}$ such that the observation at the virtual receiver is $Y_{V_{A}}=\{Y_k:k\in A\}$.
Clearly, this receiver is not an actual node, but an identity representing that the users in set $A$ share their outputs.
For an access structure $\mathcal{A}$, we will construct $|\mathcal{A}|$ virtual legitimate receivers, representing that in each of these sharing scenarios, the secret message is required to be recovered.
In addition, for each set $B\in \mathcal{B}$, we construct a virtual wiretapper $V_{E,B}$ with the observation $Y_{V_{E,B}}=\{Y_k:k\in B\}$.
These virtual eavesdroppers represent that in each of these sharing scenarios, the message should be kept secure. We note that these virtual eavesdroppers are also not actual devices, but the identities representing that nodes in non-qualified sets share their outputs.


We note that if $A_1\subset A_2$, then $X\rightarrow Y_{V_{A_2}}\rightarrow Y_{V_{A_1}}$. In this case, if we design a code such that the users in the qualified set $A_1$ can decode the secret, then the users in the qualified set $A_2$ can also decode the secret. Hence, it is not necessary to construct a virtual legitimate receiver for the qualified set $A_2$. In this way, the number of virtual legitimate receivers can be reduced in the constructed equivalent compound wiretap channel. Similarly, if $B_1\subset B_2$, then it is not necessary to construct a virtual wiretapper for the set $B_1$, since if the message is kept secure from the set $B_2$, then it is also kept secret from the set $B_1$. Hence, we can also reduce the number of constructed virtual wiretappers.

Figure~\ref{fig:compoundwc} illustrates an equivalent compound wiretap channel for a secret sharing system with four participants. Any three participants are required to recover the secret, and hence the access structure $\mathcal{A}$ includes four qualified sets $\{1,2,3\}, \{1,2,4\}, \{1,3,4\}, \{2,3,4\}$. We note that the set $\{1,2,3,4\}$ is not included as a virtual legitimate receiver due to the reason mentioned above. Furthermore, any two participants should not recover the secret, and hence six virtual eavesdroppers are created corresponding to these non-qualified sets $\{1,2\}, \{1,3\}, \{1,4\}, \{2,3\}, \{2,4\}, \{3,4\}$. 

\begin{figure}[thb]
\centering
\includegraphics[width=0.5 \textwidth]{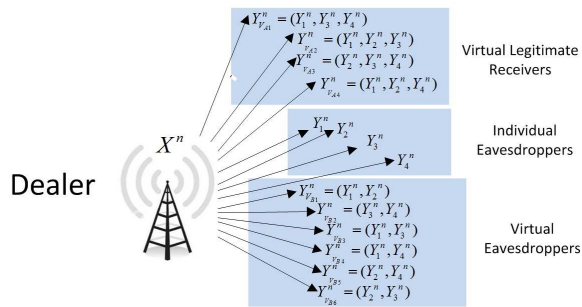}
\caption{An equivalent compound wiretap channel for a secret sharing with four participants.} \label{fig:compoundwc}
\end{figure}

Secure coding schemes for the compound wiretap channel have been proposed in \cite{Liang:2009}, which guarantee that all legitimate receivers recover the message and none of the wiretappers obtain any information about the message. By applying the approach developed in \cite{Liang:2009} to the equivalent compound wiretap channel that we construct for the secret sharing problem, we obtain corresponding schemes such that each virtual legitimate receiver can decode the message $W$ while each virtual wiretapper has negligible information about the message $W$, which are exactly the conditions required by secret sharing. The following bounds on the secret sharing capacity follows from the bounds on the secrecy capacity in \cite{Liang:2009}.
\begin{thm}\label{thm:general}
The following secret sharing rate is achievable via the broadcast channel described in~\eqref{eq:channel}
\begin{eqnarray}\label{eq:U}
R_N=\max\limits_{P_{UX}}\left[\min\limits_{A}I(U;Y_{V_A})-\max\limits_{B}I(U;Y_{V_{E,B}})\right],
\end{eqnarray}
where $U$ is an auxiliary random variable that satisfies the Markov chain relationship:
\begin{eqnarray}
U\rightarrow X\rightarrow (Y_1,\cdots, Y_K).
\end{eqnarray}
Furthermore, the following rate is an upper-bound on the secret sharing capacity
\begin{eqnarray}\label{eq:general}
R=\min\limits_{A,B}\max\limits_{P_{UX}P_{Y_{V_A}Y_{V_{E,B}}|X}}[I(U;Y_{V_A})-I(U;Y_{V_{E,B}})].
\end{eqnarray}
\end{thm}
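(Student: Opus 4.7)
The plan is to lift the proof wholesale from the compound wiretap channel results in \cite{Liang:2009}, exploiting the equivalence established just above the theorem. To make this precise, I would first note that a $(2^{nR},n)$ secret sharing code satisfying (\ref{eq:error}) for every $A\in\mathcal{A}$ and (\ref{eq:leak}) for every $B\in\mathcal{B}$ is, by construction, a code for the compound wiretap channel with legitimate receivers $\{V_A\}_{A\in\mathcal{A}}$ observing $Y_{V_A}=\{Y_k:k\in A\}$ and eavesdroppers $\{V_{E,B}\}_{B\in\mathcal{B}}$ observing $Y_{V_{E,B}}=\{Y_k:k\in B\}$, and conversely. Thus achievable secrecy rates and secrecy-capacity outer bounds for the compound channel translate directly into achievable secret sharing rates and secret sharing capacity upper bounds.

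For the achievability part (\ref{eq:U}), I would invoke the compound wiretap coding scheme from \cite{Liang:2009}. Fix an auxiliary $U$ with $U\to X\to(Y_1,\ldots,Y_K)$, generate roughly $2^{n[\min_A I(U;Y_{V_A})-\epsilon]}$ outer codewords of $u^n$-sequences, and partition them into bins of size $2^{n[\max_B I(U;Y_{V_{E,B}})+\epsilon]}$. To transmit, pick the bin indexed by the secret, draw a codeword uniformly from it, and pass the chosen $u^n$ through $P_{X|U}$. Each $V_A$ decodes by joint typicality (supported by the outer rate $\min_A I(U;Y_{V_A})$), while the leakage to each $V_{E,B}$ vanishes by the standard Wyner argument: the bin size, chosen against the worst eavesdropper, gives per-eavesdropper vanishing leakage, and a union bound over the constant-in-$n$ family $\mathcal{B}$ yields simultaneous vanishing leakage. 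Maximizing over $P_{UX}$ gives (\ref{eq:U}).

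For the converse (\ref{eq:general}), I would fix any pair $(A,B)$ with $A\in\mathcal{A}$ and $B\in\mathcal{B}$. The constraints (\ref{eq:error}) on $A$ and (\ref{eq:leak}) on $B$ imply that the same code is a valid wiretap code for the single-eavesdropper channel with legitimate output $Y_{V_A}$ and eavesdropper output $Y_{V_{E,B}}$. The Csisz\'ar--K\"orner single-letter outer bound then gives $R\le\max_{P_{UX}}[I(U;Y_{V_A})-I(U;Y_{V_{E,B}})]$. The key observation is that this functional depends on the channel $X\to(Y_{V_A},Y_{V_{E,B}})$ only through the \emph{marginals} $P_{Y_{V_A}|X}$ and $P_{Y_{V_{E,B}}|X}$, so the bound remains valid when the true joint conditional is replaced by any joint sharing those marginals; hence one may equivalently write the maximum over $P_{UX}P_{Y_{V_A}Y_{V_{E,B}}|X}$ as in (\ref{eq:general}). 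Since the bound must hold for every admissible pair $(A,B)$, taking the outer minimum closes the argument.

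The main obstacle I anticipate is conceptual rather than algebraic: ensuring the reductions are lossless in both directions. On the achievability side, the care is in verifying that a \emph{single} stochastic encoder simultaneously serves every qualified set and confuses every non-qualified set, which pins the outer-code rate to $\min_A I(U;Y_{V_A})$ and the bin rate to $\max_B I(U;Y_{V_{E,B}})$ and requires that the union-bound cost over $|\mathcal{B}|\le 2^K$ eavesdroppers (with $K$ fixed) be negligible against the exponential decay of the leakage. On the converse side, the subtlety is justifying the joint-distribution freedom in (\ref{eq:general}): one must observe that both $I(U;Y_{V_A})$ and $I(U;Y_{V_{E,B}})$ are functions of $P_{UX}$ together with the two marginal channels only, so the inner maximization over joints is well posed and does not weaken the outer bound obtained from any particular pair $(A,B)$.
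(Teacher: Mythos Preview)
Your proposal is correct and follows exactly the paper's approach: the paper does not give a standalone proof but simply states that the bounds follow from the compound wiretap results in \cite{Liang:2009} via the equivalence between secret sharing codes and compound wiretap codes established in the preceding discussion. Your write-up is in fact more detailed than the paper's, spelling out the binning construction and the per-pair Csisz\'ar--K\"orner converse that \cite{Liang:2009} supplies, but the strategy is identical.
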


Since the secrecy capacity for the general compound wiretap channel is still unknown, the secrecy sharing capacity for an arbitrary access structure $\mathcal{A}$ is not known. However, there are some interesting special cases of the secret sharing problem, which naturally correspond to degraded compound wiretap channels, and hence the secrecy sharing capacity can be characterized for these cases. For example, we consider the scenario, in which the secret is decodable only if all participants share their outputs, and is kept secret from any subset of participants. The secret sharing capacity is given as follows.
\begin{corollary}
Consider the $K$-participant secret sharing problem over the noisy channel  $P_{Y_1,\ldots,Y_K|X}$. Suppose $\mathcal{A}$ contains a single set $A=\{1, \ldots,K\}$. The secret sharing capacity is given by
\begin{eqnarray}
C_s=\max\limits_{P_X}\min_{B}\left[I(X;Y_1,\ldots,Y_K)-I(X;Y_{V_{E,B}})\right].
\end{eqnarray}
where the set $B$ can be any strict subset of $A=\{1,\ldots,K\}$.
\end{corollary}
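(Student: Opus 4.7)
My plan is to recognize this special case as a physically degraded compound wiretap channel and invoke the corresponding capacity result from~\cite{Liang:2009}. Since $\mathcal{A}$ contains only the full set $A=\{1,\ldots,K\}$, every non-qualified $B$ is a strict subset of $A$, so $Y_{V_{E,B}}=\{Y_k:k\in B\}$ is a deterministic function of $Y_{V_A}=(Y_1,\ldots,Y_K)$; in particular the Markov chain $X\to Y_{V_A}\to Y_{V_{E,B}}$ holds for every such $B$. Hence the equivalent compound wiretap channel constructed in the discussion preceding Theorem~\ref{thm:general} is physically degraded, with a single legitimate receiver dominating every virtual eavesdropper.

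For achievability, I would specialize the lower bound~\eqref{eq:U} of Theorem~\ref{thm:general} with the choice $U=X$. Because there is only one qualified set, the inner $\min_A$ is vacuous and the bound reduces to
\begin{equation*}
\max_{P_X}\bigl[I(X;Y_{V_A})-\max_{B}I(X;Y_{V_{E,B}})\bigr]=\max_{P_X}\min_{B}\bigl[I(X;Y_{V_A})-I(X;Y_{V_{E,B}})\bigr],
\end{equation*}
which matches the claimed capacity expression.

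For the converse, I would start from any $(2^{nR},n)$ code satisfying~\eqref{eq:error} and~\eqref{eq:leak}. Fano's inequality gives $H(W|Y_{V_A}^n)\le n\epsilon_n$, while~\eqref{eq:leak} gives $I(W;Y_{V_{E,B}}^n)\le n\epsilon$ for every strict subset $B$. Standard manipulations yield
\begin{equation*}
nR\le I(W;Y_{V_A}^n\,|\,Y_{V_{E,B}}^n)+n(\epsilon+\epsilon_n)\le I(X^n;Y_{V_A}^n\,|\,Y_{V_{E,B}}^n)+n(\epsilon+\epsilon_n),
\end{equation*}
and the memoryless physical degradedness simplifies the right-hand side to $I(X^n;Y_{V_A}^n)-I(X^n;Y_{V_{E,B}}^n)$.

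The main obstacle is to single-letterize so as to obtain $\max_{P_X}\min_B$ rather than the weaker $\min_B\max_{P_X}$ that one would get from a naive application of the upper bound in Theorem~\ref{thm:general}: the single-letter input distribution must be chosen \emph{independently of} $B$. To this end, I would introduce a time-sharing variable $Q$ uniform on $\{1,\ldots,n\}$, set $\bar X=X_Q$ whose marginal $P_{\bar X}$ is determined only by the code, and invoke the concavity of $P_X\mapsto I(X;Y_{V_A})-I(X;Y_{V_{E,B}})$ for physically degraded channels (equivalently, Wyner's auxiliary-free single-letterization applied to each degraded eavesdropper channel). This produces $R\le I(\bar X;Y_{V_A})-I(\bar X;Y_{V_{E,B}})+\delta_n$ with the \emph{same} $P_{\bar X}$ for every $B$, so that $R\le \min_{B}[I(\bar X;Y_{V_A})-I(\bar X;Y_{V_{E,B}})]+\delta_n\le\max_{P_X}\min_{B}[I(X;Y_{V_A})-I(X;Y_{V_{E,B}})]+\delta_n$, which matches the achievability after taking $n\to\infty$. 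This is exactly the converse argument already established for the degraded compound wiretap channel in~\cite{Liang:2009}, which one could alternatively invoke as a black box.
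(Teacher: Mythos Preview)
Your proposal is correct and follows the same route as the paper: recognize that, since every non-qualified set $B$ is a strict subset of $A$, the equivalent compound wiretap channel is (physically) degraded, and then invoke the degraded compound wiretap capacity from~\cite{Liang:2009}. The paper states the corollary directly from this observation without further detail, whereas you additionally sketch the standard degraded-wiretap converse (Fano plus secrecy, single-letterize $I(X^n;Y_{V_A}^n\mid Y_{V_{E,B}}^n)$, then remove the time-sharing variable using concavity of $P_X\mapsto I(X;Y_{V_A}\mid Y_{V_{E,B}})$), which is exactly the argument underlying the cited result.
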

For the two-participant secret sharing problem, the secret sharing capacity is given by
\begin{eqnarray}\label{eq:ctwo}
C_{two}=\max\limits_{P_X}\left[I(X;Y_1,Y_2)-\max\{I(X;Y_1),I(X;Y_2)\}\right].
\end{eqnarray}


Another example for which we can fully characterize the secret sharing capacity is the secret sharing problem over Gaussian broadcast channel, in which
\begin{eqnarray}
Y_k=X+Z_k\quad\text{for }k=1,\cdots,K.
\end{eqnarray}
where
$Z_k$, for $k=1,\cdots, K$ are independent thermal Gaussian noise variables with  mean zero and variance $N_k$. The dealer has an average power constraint:
\begin{eqnarray}
\frac{1}{n}\mathbb{E}\left\{\sum\limits_{i=1}^nX^2(i)\right\}\leq P.
\end{eqnarray}


For the Gaussian channel example, one can obtain an analytical form of the achievable secret sharing rate for an arbitrary access structure $\mathcal{A}$ by setting $U$ in~\eqref{eq:U} to be Gaussian random variable jointly distributed with $X$. However, the obtained rate takes a complex form and may not provide any insight. A simpler and often-encountered scenario is the so-called $(k,K)$-secret sharing, in which any $k$ or more users can recover the secret by sharing their observations, while any $k-1$ users obtain only negligible information from their joint observations. This secret sharing problem can be reformulated into the secure communication problem over a compound wiretap channel, which consists of one transmitter with single antenna, multiple virtual legitimate receivers with each having $k$ antennas (i.e., each virtual receiver corresponds to a group of $k$ users), and multiple virtual eavesdroppers with each having $k-1$ antennas (i.e., each virtual eavesdropper corresponds to a group of $k-1$ users). For such a single-input multiple-output (SIMO) compound wiretap channel, the secrecy capacity region can be obtained because the same Gaussian input maximizes the secrecy capacity for each pair of legitimate receiver and eavesdropper, and hence achieves the secrecy capacity of the compound channel. Therefore, for $(k,K)$-secret sharing, we obtain the secret sharing capacity given below.
\begin{corollary}
For the $(k,K)$-secret sharing problem, the secret sharing capacity is given by
\begin{eqnarray}\label{eq:ckj}
C_{k,K}=\min_{A_k, A_{k-1}} \log\left(\frac{1+\sum_{l\in A_k}  P/N_l}{1+\sum_{l \in A_{k-1}}  P/N_l}\right).
\end{eqnarray}
where $A_k$ can be any subset of $\{1,\ldots,K\}$ with $k$ indices, and $A_{k-1}$ can be any subset of $\{1,\ldots,K\}$ with $k-1$ indices.

For the special case when the channels to all receivers are symmetric, i.e., $N_k=1$ for all $k=1,\ldots, K$, the secret sharing capacity is given by
\[C_{k,K}=\log\left(\frac{1+kP}{1+(k-1)P}\right).\]
\end{corollary}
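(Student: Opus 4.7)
The plan is to apply Theorem~\ref{thm:general} to the compound wiretap channel equivalent to the $(k,K)$-problem, and then show that one universal Gaussian input simultaneously attains its lower and upper bounds. The equivalent model has one virtual legitimate receiver $V_{A_k}$ for every $k$-subset $A_k\subseteq\{1,\ldots,K\}$, observing $(X+Z_l)_{l\in A_k}$, and one virtual eavesdropper $V_{E,A_{k-1}}$ for every $(k-1)$-subset $A_{k-1}$, observing $(X+Z_l)_{l\in A_{k-1}}$.

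The key simplification I would exploit is that, because the noises $Z_l$ are independent and Gaussian, the maximum-ratio combiner is a sufficient statistic for $X$ at each virtual receiver. This reduces every virtual receiver--eavesdropper pair to a scalar Gaussian wiretap channel with effective SNRs $\sum_{l\in A_k}P/N_l$ on the legitimate side and $\sum_{l\in A_{k-1}}P/N_l$ on the eavesdropper side, and for such a scalar channel it is classical that Gaussian inputs are capacity-optimal.

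For the achievability, I would substitute $U=X\sim\mathcal{N}(0,P)$ into the lower bound \eqref{eq:U}. The mutual informations evaluate to $I(X;Y_{V_{A_k}})=\log\!\bigl(1+\textstyle\sum_{l\in A_k}P/N_l\bigr)$ and $I(X;Y_{V_{E,A_{k-1}}})=\log\!\bigl(1+\textstyle\sum_{l\in A_{k-1}}P/N_l\bigr)$, and because the $\min_{A_k}$ and $\max_{A_{k-1}}$ act on separate terms they decouple into a single outer $\min$ over ordered pairs $(A_k,A_{k-1})$, which is exactly $C_{k,K}$. For the converse I would invoke the upper bound \eqref{eq:general}: for each fixed pair the inner maximization is the secrecy capacity of the reduced scalar Gaussian wiretap channel, which the same Gaussian input achieves, and taking $\min_{A_k,A_{k-1}}$ on the outside reproduces the same $C_{k,K}$.

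The main obstacle is justifying that the single Gaussian input $\mathcal{N}(0,P)$ is simultaneously optimal for every pair in the $\min$--$\max$ upper bound, despite the fact that the direction of ``degradation'' between $V_{A_k}$ and $V_{E,A_{k-1}}$ can flip across pairs (the legitimate side need not dominate the eavesdropper termwise on every antenna). The MRC reduction sidesteps this: each pair collapses to a scalar Gaussian wiretap channel in which $\mathcal{N}(0,P)$ is universally optimal, so the $\max$--$\min$ achievability matches the $\min$--$\max$ converse of Theorem~\ref{thm:general}. The symmetric case follows by direct substitution: $\sum_{l\in A_k}P/N_l=kP$ and $\sum_{l\in A_{k-1}}P/N_l=(k-1)P$ for every choice of subsets, trivializing the outer minimum and yielding $\log\bigl((1+kP)/(1+(k-1)P)\bigr)$.
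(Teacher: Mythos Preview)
Your proposal is correct and follows essentially the same route as the paper. The paper does not give a separate proof for this corollary; it is justified by the paragraph immediately preceding it, which observes that the $(k,K)$-problem is a SIMO compound wiretap channel and that ``the same Gaussian input maximizes the secrecy capacity for each pair of legitimate receiver and eavesdropper, and hence achieves the secrecy capacity of the compound channel.'' Your argument is a more explicit version of this: the MRC sufficient-statistic reduction makes precise why a single $\mathcal{N}(0,P)$ input is simultaneously optimal across all $(A_k,A_{k-1})$ pairs, which is exactly the step the paper asserts without detail. The decoupling of $\min_{A_k}$ and $\max_{A_{k-1}}$ in the lower bound into a joint $\min_{A_k,A_{k-1}}$ is the only place one could stumble, and you handle it correctly.
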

\begin{remark}
It is interesting to note that for the special case with symmetric channels, the secret sharing capacity depends only on $k$ but not on $K$ which is the total number of participants.
\end{remark}

\section{Broadcast Channel with Layered Decoding and Secrecy}\label{sec:model}
In order to study a more general secret sharing problem in which simultaneously sharing multiple secrets is required as we introduce in Section~\ref{sec:app}, we need to study a broadcast wiretap model with layered decoding and secrecy. In this section, we first introduce the system model for this channel, and then we provide our characterization of the secrecy capacity region for this channel. These results will then be applied to the layered multiple secrets sharing problem in Section~\ref{sec:app}.


\subsection{System Model}\label{subsec:model}
\begin{figure}[thb]
\begin{center}
\includegraphics[width=5in]{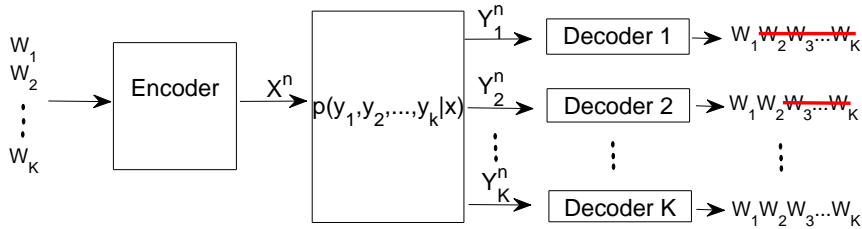}
\caption{The broadcast channel with layered decoding and secrecy}
\label{model}
\end{center}
\end{figure}

In this section, we consider the model of the degraded broadcast channel with layered decoding and secrecy constraints (see Fig.~\ref{model}), in which a transmitter transmits to $K$ receivers. The channel transition probability function is given by $P_{Y_1\cdots Y_K|X}$, in which $X\in\mathcal{X}$ is the channel input and $Y_k\in\mathcal{Y}_k$ is the channel output of receiver $k$ for $k=1,\ldots,K$. It is assumed that the receivers have degraded outputs, i.e., $Y_1,\cdots,Y_K$ satisfy the following Markov chain condition, i.e., the degradedness condition:
\begin{eqnarray}
X\rightarrow Y_K\rightarrow Y_{K-1}\rightarrow\ldots\rightarrow Y_2\rightarrow Y_1.\label{eq:markov}
\end{eqnarray}
Hence, the quality of channels gradually degrades from receiver $K$ to receiver $1$. The transmitter has $K$ messages $W_1,\cdots,W_K$ intended for the $K$ receivers. The system is required to satisfy the following layered decoding and secrecy constraints. For $k=1,\ldots,K$, receiver $k$ needs to decode the messages $W_1,\cdots,W_k$, and to be kept ignorant of messages $W_{k+1},\cdots, W_K$ (see Fig.~\ref{model} for an illustration). We note that the model presented here is well motivated by application of secret sharing problems to be discussed in Section \ref{sec:app}.

A $(2^{nR_1},\cdots,2^{nR_K},n)$ code for the channel consists of
\begin{list}{$\bullet$}{\topsep=0ex \leftmargin=6.5mm \rightmargin=0mm \itemsep=0mm}
\item $K$ message sets: $W_k\in\mathcal{W}_k=\{1,\cdots, 2^{nR_k}\}$ for $k=1,\cdots, K$, which are independent from each other and each message is uniformly distributed over the corresponding message set;
\item An (possibly stochastic) encoder $f^n$: $\mathcal{W}_1\times\cdots\times\mathcal{W}_K\rightarrow \mathcal{X}^n$;
\item $K$ decoders $g_k^n: \mathcal{Y}_k^n\rightarrow (\mathcal{W}_1,\cdots,\mathcal{W}_k)$ for $k=1,\cdots, K$.
\end{list}

Hence, a secrecy rate tuple $(R_1,\cdots,R_K)$ is said to be {\em achievable}, if there exists a sequence of $(2^{nR_1},\cdots,2^{nR_K},n)$ codes such that both the average error probability
\begin{eqnarray}
P_e^n=\text{Pr}\left( \cup_{k=1}^K \{(W_1,\cdots,W_k)\neq g_k^n(Y_k^n)\}\right)\label{eq:decode}
\end{eqnarray}
and the leakage rate at each receiver $k$ for $k=1,\ldots,K$
\begin{equation}
\frac{1}{n}I(W_{k+1},\cdots,W_{K};Y_k^n|W_1,\cdots,W_k)\label{eq:security}
\end{equation}
approach zero as $n$ goes to infinity.

Here, the asymptotically small error probability as in \eqref{eq:decode} implies that each receiver $k$ is able to decode messages $W_1,\ldots,W_k$, and asymptotically small leakage rate as in \eqref{eq:security} for each receiver $k$ implies that receiver $k$ is kept ignorant of messages $W_{k+1},\ldots,W_K$. Our goal is to characterize the secrecy capacity region that consists of all achievable rate tuples.



We also consider the $K$-receiver degraded Gaussian broadcast channel, in which
\begin{equation}\label{Gmodel}
Y_k=X+Z_k, \text{ }k=1,\cdots,K,
\end{equation}
where $Z_k$ is a zero mean Gaussian noise variable with variance $N_k$ at receiver $k$. We assume that $0<N_K<N_{K-1}<\ldots<N_1$. The transmitter has an average power constraint $P$. Since the secrecy capacity region only depends on the marginal distribution of the channel input and channel output at each receiver, not on the joint distribution of the channel outputs, changing the correlation of those noise variables has no effects on the secrecy capacity region. Hence, we can adjust the correlation of those noise variables such that the channel outputs at each receiver and the channel input satisfy the same Markov chain as shown in \eqref{eq:markov}.

We further consider the $K-$receiver degraded Gaussian MIMO broadcast channel.
The received signal at receiver $k$ for one channel use is given by
\begin{equation}
\mathbf{Y}_k=\mathbf{X}+\mathbf{Z}_k, \text{ }k=1,\ldots,K,
\end{equation}
where the channel input $\mathbf X$, the channel output $\mathbf Y_k$ and the noise $\mathbf Z_k$ are $r$-dimensional vectors. Furthermore, the noise variables $\mathbf{Z}_k$ are zero-mean Gaussian random vectors with covariance matrices $\mathbf{\Sigma}_k$ for $k=1,\ldots,K$ that satisfy the following order:
\begin{equation}
\mathbf{0}\prec \mathbf{\Sigma}_K \preceq \mathbf{\Sigma}_{K-1}\preceq\cdots\preceq\mathbf{\Sigma}_1.
\end{equation}
The channel input $\mathbf{X}$ is subject to a covariance constraint
\begin{equation}
E[\mathbf{XX}^\top]\preceq \mathbf{S}
\end{equation}
where $\mathbf{S}\succ \mathbf{0}$.
Since the secrecy capacity region does not depend on the correlation across the channel outputs, we can adjust the correlation between the noise vectors such that the channel inputs and channel outputs satisfy the following Markov chain:
\begin{equation}\label{eq:v_markov}
    \mathbf{X}\rightarrow \mathbf{Y}_K\rightarrow \mathbf{Y}_{K-1}\rightarrow\ldots\rightarrow \mathbf{Y}_2\rightarrow \mathbf{Y}_1.
\end{equation}
Hence, the quality of channels gradually degrades from receiver $K$ to receiver $1$.

%

\subsection{Characterization of Secrecy Capacity Region}\label{sec:capacity_result}
In this subsection, we characterize the secrecy capacity region for the model presented in Section \ref{subsec:model}.
The following theorem characterized the secrecy capacity region of the discrete memoryless channel.
\begin{theorem}\label{th:capadmc}
The secrecy capacity region of the degraded broadcast channel with layered decoding and secrecy constraints as described in Section \ref{subsec:model} contains rate tuples $(R_1,\cdots,R_K)$ satisfying
\begin{flalign}\label{bd:dmc}
R_1&\leq I(U_1;Y_1),\nn\\
R_k&\leq I(U_k;Y_k|U_{k-1})-I(U_k;Y_{k-1}|U_{k-1}),\quad\text{for } k=2,\ldots,K-1,\nn\\
R_K&\leq I(X;Y_K|U_{K-1})-I(X;Y_{K-1}|U_{K-1}),
\end{flalign}
for some $P_{U_1U_2\ldots U_{K-1}X}$ such that the following Markov chain holds
\begin{equation}
U_1\rightarrow U_2\rightarrow\ldots\rightarrow U_{K-1}\rightarrow X\rightarrow Y_K\rightarrow\ldots\rightarrow Y_1. \label{Markov}
\end{equation}
\end{theorem}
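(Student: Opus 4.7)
The plan is to establish Theorem~1 by separate achievability and converse arguments tailored to the degradedness in \eqref{eq:markov}.

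For achievability, I would use heterogeneous superposition combined with stochastic encoding (random binning) at every layer. Generate $2^{n(R_1+\tilde R_1)}$ cloud centers $u_1^n\sim\prod_i P_{U_1}$, for each $u_1^n$ generate $2^{n(R_2+\tilde R_2)}$ satellite codewords $u_2^n\sim\prod_i P_{U_2|U_1}$, and continue nestedly up to $2^{n(R_K+\tilde R_K)}$ codewords $x^n\sim\prod_i P_{X|U_{K-1}}$. The extra rates $\tilde R_k$ serve as randomization budgets: message $W_k$ is encoded into a codeword drawn uniformly from a bin of size $2^{n\tilde R_k}$. Receiver $k$ jointly decodes $u_1^n,\ldots,u_k^n$ (and $x^n$ when $k=K$) by joint typicality; by the Markov structure on the $U_j$'s and the degradedness in \eqref{eq:markov}, the binding constraint is $R_j+\tilde R_j\le I(U_j;Y_j|U_{j-1})$ for each $j\le k$. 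Setting $\tilde R_j = I(U_j;Y_{j-1}|U_{j-1})$ then yields exactly the bound on $R_j$ claimed in \eqref{bd:dmc}.

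The delicate step in achievability is the joint leakage analysis. At each receiver $k$ one must bound
\begin{equation*}
\frac{1}{n}I(W_{k+1},\ldots,W_K;Y_k^n\mid W_1,\ldots,W_k),
\end{equation*}
which, unlike the single-confidential-message setting of \cite{Gamal2012}, involves multiple layered secret messages. I would decompose this by the chain rule as $\sum_{j=k+1}^{K} I(W_j;Y_k^n\mid W_1,\ldots,W_{j-1})$ and control each summand by generalizing \cite{Gamal2012}: conditioned on $W_1,\ldots,W_{j-1}$, the codewords within each bin look approximately uniform to a receiver observing $Y_k^n$ provided $\tilde R_j\ge I(U_j;Y_k|U_{j-1})$, and by degradedness this holds for every $k\le j-1$ at the choice $\tilde R_j=I(U_j;Y_{j-1}|U_{j-1})$. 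The new ingredient is tracking the conditional distributions that arise once prior messages and their codewords are treated as revealed.

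For the converse, I start from Fano's inequality $nR_k\le I(W_k;Y_k^n\mid W_1,\ldots,W_{k-1})+n\epsilon_n$ and combine it with the secrecy constraint at receiver $k-1$, which gives $I(W_k;Y_{k-1}^n\mid W_1,\ldots,W_{k-1})\le n\epsilon_n$. Subtracting yields
\begin{equation*}
nR_k \le I(W_k;Y_k^n\mid W_1,\ldots,W_{k-1}) - I(W_k;Y_{k-1}^n\mid W_1,\ldots,W_{k-1}) + 2n\epsilon_n.
\end{equation*}
I would then apply the Csisz\'ar sum identity and identify auxiliaries of the nested form $U_{k,i}=(W_1,\ldots,W_k,V_{k,i})$, where $V_{k,i}$ is an appropriate past/future tuple of channel outputs chosen so that the right-hand side single-letterizes to $\sum_i[I(U_{k,i};Y_{k,i}|U_{k-1,i})-I(U_{k,i};Y_{k-1,i}|U_{k-1,i})]$ and the Markov chain $U_1\to U_2\to\cdots\to U_{K-1}\to X$ holds uniformly in $k$. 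A standard time-sharing variable then collapses the sum to the single-letter expression in \eqref{bd:dmc}, and the remaining Markov chain into $(Y_K,\ldots,Y_1)$ is inherited from \eqref{eq:markov}. The principal obstacle I anticipate is precisely this step: for $K=2$ the two-layer techniques of \cite{Liu2010, ekrem2012} apply directly, but for $K\ge 3$ every bound on $R_k$ references the \emph{previous} auxiliary $U_{k-1}$, so a single family $\{U_{k,i}\}$ must simultaneously deliver all $K-1$ recursive bounds, respect the Markov chain across layers, and ensure that the Csisz\'ar sum contributions telescope rather than accumulate between consecutive layers.
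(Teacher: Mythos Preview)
Your overall architecture matches the paper's: heterogeneous superposition with random binning for achievability, and Fano plus the secrecy constraint plus Csisz\'ar's sum identity for the converse. Two points are worth flagging.

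For the leakage analysis, you decompose via the chain rule as $\sum_{j=k+1}^K I(W_j;Y_k^n\mid W^{j-1})$ and treat each summand as a single-layer wiretap problem. The paper takes a different decomposition: it bounds $I(W_{k+1}^K;Y_k^n\mid W^k)$ in one shot by writing it (up to $n\epsilon_n$) as $I(U_K^n;Y_k^n)-H(W^k,L^K\mid W_{k+1}^K)+H(W^k,L^K\mid Y_k^n,W_{k+1}^K)$ and then peels the last term into $\sum_{j=k+1}^K H(L_j\mid Y_k^n,U_{j-1}^n,W_j)$, each controlled by a dedicated counting lemma (Lemma~\ref{lemma:nit}). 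Your route can also be made to work, but there is a subtlety you glossed over: conditioning on $W^{j-1}$ does \emph{not} pin down the cloud center $U_{j-1}^n$, because the bin indices $L_2,\ldots,L_{j-1}$ are still random. You need the extra step $I(W_j;Y_k^n\mid W^{j-1})\le I(W_j;Y_k^n\mid W^{j-1},L^{j-1})$ (valid since $W_j$ is independent of $L^{j-1}$) before the standard wiretap bound governed by $\tilde R_j\ge I(U_j;Y_k\mid U_{j-1})$ applies. The paper's decomposition sidesteps this by working with $(W^k,L^K)$ jointly from the outset.

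For the converse, the paper's concrete instantiation of your $V_{k,i}$ is $U_{k,i}=(W_1,\ldots,W_k,Y_k^{i-1},Y_{k-1,i+1}^n)$. The Markov chain $U_{1,i}\to\cdots\to U_{K-1,i}\to X_i$ then follows from the channel degradedness, and the residual Csisz\'ar-sum cross terms that appear when rewriting the conditioning from $(W^{k-1},Y_k^{i-1},Y_{k-1,i+1}^n)$ to $(W^{k-1},Y_{k-1}^{i-1},Y_{k-2,i+1}^n)$ are shown nonpositive again by degradedness. So the ``telescoping'' you anticipate is resolved not by algebraic cancellation across consecutive layers, but by two further invocations of the Markov chain~\eqref{eq:markov} within each layer's bound.
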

\begin{remark}
By setting $R_1=0$ and $K=3$, Theorem \ref{th:capadmc} reduces to the results for scenario 2 in \cite{ekrem2012} with each group having a single user for the model in \cite{bagherikaram2008}, and for the example in \cite{Liu2010}.
\end{remark}
\begin{proof}
The proof of the achievability and the proof of converse are provided in Appendices \ref{achiv_dmc} and \ref{conv_dmc}, respectively.
\end{proof}

We here briefly introduce the idea of the achievable scheme, which is based on the stochastic encoding (i.e., random binning) and superposition coding. For each message, we design one layer of codebook. This codebook contains codewords that are divided into a number of bins, where the bin number contains the information of the corresponding message. The receivers that are required to decode the message can tell which bin the codeword is in with a small probability of error, while other receivers (i.e., those with worse channel quality) are kept ignorant of this message. These layers of codebooks are superposed together via superposition coding. The major challenge of the achievability proof arises in the analysis of leakage rates, which is much more involved than the cases with two secure messages studied in \cite{Liu2010,ekrem2012}. In our proof, we develop novel generalization of the analysis provided in \cite{Gamal2012} for the case with one secure message to multiple secure messages. The details can be referred to Appendix \ref{achiv_dmc}.




We next characterize the secrecy capacity region of the degraded Gaussian broadcast channel with layered decoding and secrecy constraints. We note that although the Gaussian channel is a special case of the MIMO channel, we present the result for the Gaussian channel here, because the result for the Gaussian channel is simpler, and hence it is easier to follow the converse proof for this case. This helps the understanding of the more complicated proof of the converse for the MIMO channel.
\begin{theorem}\label{thm:gaussian_channel}
The secrecy capacity region of a $K$-user Gaussian broadcast channel with layered decoding and secrecy constraints as described in Section \ref{subsec:model} contains rate tuples $(R_1,R_2,\ldots,R_K)$ satisfying
\begin{flalign}
R_1& \leq \frac{1}{2}\log\left(\frac{N_1+\sum_{j=1}^{K}P_j}{N_1+\sum_{j=2}^KP_j}\right)\nn\\
R_k& \leq \frac{1}{2}\log\left(\frac{N_k+\sum_{j=k}^{K}P_j}{N_k+\sum_{j=k+1}^KP_j}\right)-\frac{1}{2}\log\left(\frac{N_{k-1}+\sum_{j=k}^{K}P_j}{N_{k-1}+\sum_{j=k+1}^KP_j}\right),\quad\text{for }2\leq k\leq K,
\end{flalign}
for some nonnegative variables $P_1,P_2,\ldots,P_K$ such that $\sum_{k=1}^{K}P_k\leq P$.
\end{theorem}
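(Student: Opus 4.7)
Since the statement is purely an inner-bound claim (the region \emph{contains} the listed rate tuples), my plan is to derive it as a direct specialization of the achievable region in Theorem~\ref{th:capadmc} with a carefully chosen Gaussian input distribution. The first step is to justify that the inner bound in Theorem~\ref{th:capadmc}, originally stated for discrete memoryless channels, carries over to the Gaussian model \eqref{Gmodel} under the average power constraint $\frac{1}{n}\mE[\sum_i X^2(i)] \le P$. For this I would invoke the standard discretization/typicality arguments (as in the treatment of Gaussian wiretap and broadcast channels), or equivalently repeat the random-binning plus superposition achievability of Theorem~\ref{th:capadmc} directly with Gaussian codebooks and appealing to the usual truncation argument to enforce the power constraint. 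Either route is routine given that the degradedness condition \eqref{eq:markov} holds for the Gaussian channel after the harmless correlation adjustment of the noise variables noted in Section~\ref{subsec:model}.

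Once the Gaussian extension is in place, the core of the proof reduces to a specific choice of the auxiliary random variables in \eqref{bd:dmc}. Given nonnegative $P_1,\dots,P_K$ with $\sum_{k=1}^K P_k \le P$, I would let $V_1,\dots,V_K$ be independent zero-mean Gaussians with $\Var(V_k)=P_k$, and set
\begin{equation*}
X = \sum_{j=1}^{K} V_j, \qquad U_k = \sum_{j=1}^{k} V_j \text{ for } k=1,\dots,K-1.
\end{equation*}
With this choice, $U_1 \to U_2 \to \cdots \to U_{K-1} \to X$ is clearly Markov, and combined with \eqref{eq:markov} this yields the full Markov chain \eqref{Markov} required by Theorem~\ref{th:capadmc}. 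The power constraint on $X$ is satisfied by construction.

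The remaining step is to evaluate the mutual information terms in \eqref{bd:dmc}. For $R_1$, conditioning on nothing and noting $Y_1 = V_1 + \sum_{j=2}^K V_j + Z_1$ with $V_1$ independent of the rest, one obtains $I(U_1;Y_1)=\frac{1}{2}\log\!\big(\tfrac{N_1+\sum_{j=1}^{K}P_j}{N_1+\sum_{j=2}^{K}P_j}\big)$. For $2\le k\le K-1$, I would use that $U_k - U_{k-1}=V_k$ is independent of $U_{k-1}$, together with $Y_k - U_{k-1} = V_k + \sum_{j=k+1}^K V_j + Z_k$, so
\begin{equation*}
I(U_k;Y_k\mid U_{k-1}) = \tfrac{1}{2}\log\!\Big(\tfrac{N_k+\sum_{j=k}^{K}P_j}{N_k+\sum_{j=k+1}^{K}P_j}\Big),
\end{equation*}
and analogously for $I(U_k;Y_{k-1}\mid U_{k-1})$ with $N_{k-1}$ in place of $N_k$; subtracting gives exactly the $R_k$ bound. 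The $R_K$ bound follows from the same calculation with $X$ in place of $U_K$ and the convention $\sum_{j=K+1}^K P_j=0$. Sweeping over all admissible splits $(P_1,\dots,P_K)$ then yields the claimed inner bound.

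The only non-mechanical point I anticipate is the Gaussianization step (justifying that the DMC inner bound transfers to the Gaussian channel with a power constraint); everything downstream is a direct evaluation. I would present this step by appealing to the standard extension machinery rather than re-deriving it, since the random-binning/superposition scheme underlying Theorem~\ref{th:capadmc} is known to accommodate Gaussian codebooks without structural change.
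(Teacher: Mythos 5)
Your proposal is correct and follows essentially the same route as the paper: the paper's achievability argument for Theorem~\ref{thm:gaussian_channel} is precisely to instantiate Theorem~\ref{th:capadmc} with jointly Gaussian auxiliaries $U_k\sim\mathcal N\bigl(0,\sum_{j=1}^{k}P_j\bigr)$ (your independent-increments construction $U_k=\sum_{j=1}^k V_j$, $X=U_K$) and evaluate the mutual-information differences, with the DMC-to-Gaussian transfer treated as routine. The only difference is that the paper, intending the region to be the exact capacity region, additionally supplies a converse in Appendix~\ref{conv_gau}, which the literal ``contains'' wording of the statement does not require and which your inner-bound argument correctly omits.
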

\begin{proof}
The achievability is based on Theorem \ref{th:capadmc} by setting $(U_1,\ldots,U_K,X)$ to be jointly Gaussian distributed random variables with $U_k\thicksim \mathcal N(0,\sum_{j=1}^kP_j)$.

The converse proof is given in Appendix \ref{conv_gau}.
\end{proof}


We now characterize the secrecy capacity region for the degraded Gaussian MIMO channel with layered decoding and secrecy constraints in the following theorem.
\begin{theorem}\label{MIMO}
The secrecy capacity region of the degraded Gaussian MIMO Broadcast channel with layered decoding and secrecy constraints as described in Section~\ref{subsec:model} contains rate tuples $(R_1,\ldots,R_K)$ satisfying the following inequalities:
\begin{flalign}\label{eq:capacity_mimo}
R_1&\leq \frac{1}{2}\log\frac{|\mathbf{\Sigma}_1+\mathbf{S}|}{|\mathbf{\Sigma}_1+\mathbf{S}_1|}\nn\\
R_k&\leq \frac{1}{2}\log \frac{|\mathbf{\Sigma}_k+\mathbf{S}_{k-1}|}{|\mathbf{\Sigma}_k+\mathbf{S}_k|}-\frac{1}{2}\log\frac{|\mathbf{\Sigma}_{k-1}+\mathbf{S}_{k-1}|}{|\mathbf{\Sigma}_{k-1}+\mathbf{S}_k|},\quad\text{ for }2\leq k\leq K-1\nn\\
R_K&\leq \frac{1}{2}\log\frac{|\mathbf{\Sigma}_K+\mathbf{S}_{K-1}|}{|\mathbf{\Sigma}_K|}-\frac{1}{2}\log\frac{|\mathbf{\Sigma}_{K-1}+\mathbf{S}_{K-1}|}{|\mathbf{\Sigma}_{K-1}|},
\end{flalign}
for some $\mathbf 0\preceq\mathbf{S}_{K-1}\preceq \mathbf{S}_{K-2}\preceq \ldots\preceq \mathbf{S}_2\preceq \mathbf{S}_1\preceq \mathbf{S}$.
\end{theorem}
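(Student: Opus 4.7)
The plan is to obtain achievability by specializing Theorem \ref{th:capadmc} (extended from the DMC setting to continuous alphabets with a covariance input constraint by a standard quantization and limiting argument) to jointly Gaussian auxiliary random vectors built by superposition. Since the stated theorem only claims ``contains,'' I need only exhibit a valid choice of $(\mathbf U_1,\ldots,\mathbf U_{K-1},\mathbf X)$ that attains every corner point corresponding to admissible $(\mathbf S_1,\ldots,\mathbf S_{K-1})$; no converse or extremal-inequality machinery is required.

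First I would fix any chain $\mathbf 0\preceq \mathbf S_{K-1}\preceq\cdots\preceq \mathbf S_1\preceq \mathbf S$ and define the ``layer increments''
\begin{equation*}
\mathbf K_1 := \mathbf S-\mathbf S_1,\qquad \mathbf K_j := \mathbf S_{j-1}-\mathbf S_j \ \text{for } 2\leq j\leq K-1,\qquad \mathbf K_K := \mathbf S_{K-1},
\end{equation*}
each of which is PSD by the ordering assumption. Let $\mathbf V_1,\ldots,\mathbf V_K$ be mutually independent zero-mean Gaussian vectors with $\operatorname{Cov}(\mathbf V_j)=\mathbf K_j$, and set
\begin{equation*}
\mathbf U_k := \mathbf V_1+\mathbf V_2+\cdots+\mathbf V_k,\quad k=1,\ldots,K-1,\qquad \mathbf X := \mathbf V_1+\mathbf V_2+\cdots+\mathbf V_K.
\end{equation*}
By construction $\mathbf U_1\to\mathbf U_2\to\cdots\to\mathbf U_{K-1}\to\mathbf X$ is a Markov chain (each $\mathbf U_{k+1}$ adds an independent summand to $\mathbf U_k$), which together with the degradedness \eqref{eq:v_markov} yields the full Markov requirement of Theorem \ref{th:capadmc}. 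Also $E[\mathbf X\mathbf X^\top]=\sum_{j=1}^{K}\mathbf K_j=\mathbf S$, so the covariance constraint is met, and one computes $\operatorname{Cov}(\mathbf U_k)=\mathbf S-\mathbf S_k$.

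Next I would evaluate the three rate expressions from Theorem \ref{th:capadmc} on this Gaussian ensemble. For $R_1$, since $\mathbf Y_1=\mathbf X+\mathbf Z_1$ and $\mathbf Y_1\mid \mathbf U_1\sim\mathcal N(\mathbf U_1,\mathbf \Sigma_1+\mathbf K_2+\cdots+\mathbf K_K)$ (using independence of the $\mathbf V_j$),
\begin{equation*}
I(\mathbf U_1;\mathbf Y_1)=\tfrac12\log\frac{|\mathbf\Sigma_1+\mathbf S|}{|\mathbf\Sigma_1+\mathbf S_1|}.
\end{equation*}
For $2\leq k\leq K-1$, conditioning on $\mathbf U_{k-1}$ removes the contribution of $\mathbf V_1,\ldots,\mathbf V_{k-1}$, so
\begin{equation*}
I(\mathbf U_k;\mathbf Y_k\mid\mathbf U_{k-1})=\tfrac12\log\frac{|\mathbf\Sigma_k+\mathbf S_{k-1}|}{|\mathbf\Sigma_k+\mathbf S_k|},\qquad I(\mathbf U_k;\mathbf Y_{k-1}\mid\mathbf U_{k-1})=\tfrac12\log\frac{|\mathbf\Sigma_{k-1}+\mathbf S_{k-1}|}{|\mathbf\Sigma_{k-1}+\mathbf S_k|},
\end{equation*}
whose difference is exactly the middle bound in \eqref{eq:capacity_mimo}. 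The same computation with $\mathbf X$ in place of $\mathbf U_K$ gives the $R_K$ expression. These three calculations are routine Gaussian log-determinant manipulations and produce precisely the rate tuple claimed.

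The only non-routine step is the passage from Theorem \ref{th:capadmc} (stated for finite alphabets) to Gaussian inputs with a matrix covariance constraint; I would handle this by the standard discretization-and-limiting argument (approximate each Gaussian by finite-alphabet quantizations, apply the DMC coding theorem, and invoke continuity of mutual information under the weak topology, exactly as in the treatment of the Gaussian MIMO wiretap channel). The ``hardest'' piece is merely verifying that the per-letter covariance constraint propagates through this limiting procedure, which is by now textbook; no new ideas beyond those already used to deduce Theorem \ref{thm:gaussian_channel} from Theorem \ref{th:capadmc} are needed, and the MIMO case proceeds identically with scalar variances replaced by the matrices $\mathbf S_k$.
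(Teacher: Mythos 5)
Your achievability argument is correct and is in substance identical to the paper's: the paper simply asserts the choice of jointly Gaussian $\mathbf U_1\to\cdots\to\mathbf U_{K-1}\to\mathbf X$ with $\operatorname{Cov}(\mathbf U_k)=\mathbf S-\mathbf S_k$ and $\operatorname{Cov}(\mathbf X)=\mathbf S$, and your superposition construction $\mathbf U_k=\mathbf V_1+\cdots+\mathbf V_k$ with increments $\mathbf K_j$ is exactly the natural realization of that choice; the log-determinant evaluations of the three mutual-information expressions from Theorem \ref{th:capadmc} check out, and the discretization remark is an acceptable way to handle the continuous alphabet.

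The genuine gap is that you have proved only half of what the paper proves for this theorem. You read ``contains'' literally and conclude that ``no converse or extremal-inequality machinery is required,'' but the theorem is presented as the \emph{characterization} of the secrecy capacity region (the same ``contains'' phrasing is used in Theorems \ref{th:capadmc} and \ref{thm:gaussian_channel}, both of which are proved with converses), and the paper's own proof of Theorem \ref{MIMO} explicitly states that the converse is given in Appendix \ref{conv_gau_mimo}. That converse is the substantive content of the result and is far from routine: starting from the single-letter bounds of Appendix \ref{conv_dmc}, it uses conditional Fisher information matrices, the entropy-difference bounds of Lemma \ref{lemma2011}, the monotonicity and continuity of $r(t)=\frac{1}{2}\log\frac{|\mathbf A+\mathbf B+t\mathbf\Delta|}{|\mathbf A+t\mathbf\Delta|}$ (Lemma \ref{continuous}) together with Lemmas \ref{lemma4}, \ref{lemaup} and \ref{lemma8} to construct, recursively in $k$, matrices $\mathbf 0\preceq\mathbf S_{K-1}\preceq\cdots\preceq\mathbf S_1\preceq\mathbf S$ for which each entropy difference $h(\mathbf Y_k|\mathbf U_k)-h(\mathbf Y_{k-1}|\mathbf U_k)$ equals $\frac{1}{2}\log\frac{|\mathbf S_k+\mathbf\Sigma_k|}{|\mathbf S_k+\mathbf\Sigma_{k-1}|}$, plus a separate limiting argument with an auxiliary receiver $\mathbf Y_0=\mathbf X+\mathbf Z_0$, $\mathbf\Sigma_0=t\mathbf\Sigma_1$, $t\to\infty$, to lower-bound $h(\mathbf Y_1|\mathbf U_1)$ for the $R_1$ bound. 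None of this appears in your proposal, so as a proof of the result the paper actually claims and proves, the attempt is incomplete; as a proof of the inner bound only, it is fine.
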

We note that if the system has only single transmit antenna and receive antenna, then the secrecy capacity region in Theorem \ref{MIMO} reduces to that in Theorem \ref{thm:gaussian_channel}.

We further note that due to the layered secrecy constraints, the major challenge in the converse proof for the secrecy capacity region lies in development of upper bounds in certain recursive structures for three or more consecutive layers of receivers. Our contribution here lies in the construction of a series of covariance matrices representing input resources for layered messages such that the secrecy rates can be upper bounded as the desired recursive forms in terms of these covariance matrices. The details can be referred to Appendix \ref{conv_gau_mimo}.
\begin{proof}
The achievability of the region \eqref{eq:capacity_mimo} follows by choosing the auxiliary random variables $\mathbf U_1,\ldots,\mathbf U_{K-1},\mathbf X$ in \eqref{bd:dmc} to be jointly Gaussian distributed and satisfy the following Markov chain condition:
\begin{equation}
\mathbf{U}_1\rightarrow \mathbf{U}_2\rightarrow \ldots \rightarrow \mathbf{U}_{K-1}\rightarrow \mathbf{X},
\end{equation}
where the covariance of $\mathbf U_k$ is set to be $\mathbf S-\mathbf S_k$ for $k=1,\ldots,K-1$, and the covariance of $\mathbf X$ is set to be $\mathbf S$.

The proof of converse is given in Appendix \ref{conv_gau_mimo}.
\end{proof}

\section{Application to Sharing Multiple Secrets}\label{sec:app}

\begin{figure}[thb]
\begin{center}
\includegraphics[width=4in]{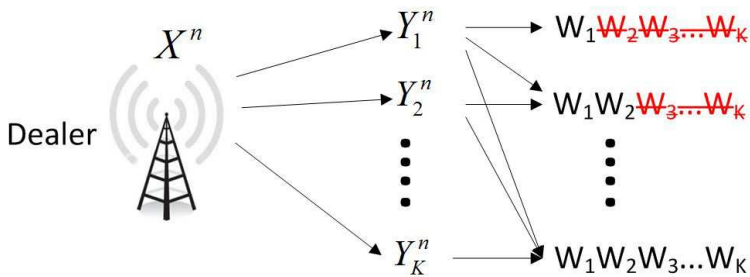}
\caption{The model of secret sharing via a broadcast channel}\label{secretmodel}
\end{center}
\end{figure}

In this section, we apply our result in Section \ref{sec:capacity_result} for the degraded MIMO channel to studying the following problem of sharing multiple secrets. Here, a dealer wishes to share $K$ secrets $W_1,W_2,\ldots,W_K$
with $K$ participants via a broadcast channel (see Fig.~\ref{secretmodel}). The channel input sent by the dealer is denoted by $\mathbf X$ and the channel output received at participant $k$ is denoted by $Y_k$ for $k=1,\ldots,K$.
It is required that participant 1 decodes $W_1$, and participants $1$ and $2$ decode $W_1$ and $W_2$ by sharing their outputs $(Y_1,Y_2)$, but $W_2$ should be kept secure from participant $1$.
Such requirements extend to $k$ participants for $k=1,\ldots,K$ in the sense that participants $1$ to $k$ can recover the first $k$ messages $W_1,\ldots,W_k$ by sharing their outputs $(Y_1,\ldots,Y_k)$,
but the new message $W_k$ should be secure from the first $k-1$ participants. Hence, as one more participant joins the group, one more secret can be recovered, and this new secret is secure from
(and hence cannot be recovered by) a smaller group. The goal is to characterize the secret sharing capacity region, which contains all possible achievable rate tuples $(R_1,R_2,\ldots,R_K)$
for $K$ secrets.

This secret sharing problem involves sharing multiple secrets in a layered fashion, and is challenging to solve using the classical
approach based on number theory. Here, we solve this problem by constructing an equivalent Gaussian MIMO broadcast model as described in Section \ref{subsec:model}.

We assume that the dealer communicates to the participants via a Gaussian MISO broadcast channel corrupted by additive Gaussian noise variables.
The dealer has $K$ antennas and each receiver has one antenna.
The relationship of the channel input from the dealer and the channel outputs at all participants is given by
\begin{equation}
\left(\begin{array}{c}
        Y_1 \\
        \vdots \\
        Y_K
      \end{array}
      \right)=\mathbf H\left(\begin{array}{c}
                              X_1 \\
                              \vdots \\
                              X_K
                            \end{array}\right)+\left(\begin{array}{c}
                                                       Z_1 \\
                                                       \vdots \\
                                                       Z_K
                                                     \end{array}\right)
\end{equation}
where $\mathbf H$ is the $K\times K$ channel matrix, which is assumed to be invertible,
$(Y_1,\ldots,Y_K)$ are channel outputs at the $K$ participants, $(X_1,\ldots X_K)$ are the channel inputs from the $K$ antennas of the dealer, and
$(Z_1,\ldots, Z_K)$ is a random Gaussian vector with the covariance matrix $\mathbf\Sigma$ with each entry $\mathbf\Sigma_{ij}=E[Z_iZ_j]=\sigma^2_{ij}$.
We assume that the dealer's input is subject to a resource constraint, $E[\mathbf X\mathbf X^T]\preceq \mathbf S$.

We note that it is reasonable to assume that $\mathbf H$ is invertible in order to guarantee that each participant's output contains new information compared to other participants so that
new secret can be recovered when this participant joins a group.

We reformulate the above secret sharing model into a degraded MIMO broadcast communication system by designing a virtual receiver for each sharing group of participants.
More specifically, we design a virtual receiver $\mathbf V_k$ for the group of the first $k$ participants, i.e., $\mathbf V_k=(Y_1,\ldots,Y_k)$, for $1\leq k\leq K$.
For technical convenience, we add $K-k$ specially designed outputs $\widetilde Y_{k+1},\ldots,\widetilde Y_K$ to $\mathbf V_k$ so that it contains $K$ components, i.e., the virtual receiver $\mathbf V_k$
has $K$ antennas.
The channel outputs at those $K$ antennas are given by,
\begin{equation}\label{secret_eq}
    \mathbf V_k=\left(
                  \begin{array}{c}
                    Y_1\\
                    \vdots \\
                    Y_k \\
                    \widetilde Y_{k+1} \\
                    \vdots \\
                    \widetilde Y_K \\
                  \end{array}
                \right)=\mathbf H\left(
                                  \begin{array}{c}
                                    X_1 \\
                                    \vdots \\
                                    X_K \\
                                  \end{array}
                                \right)+\left(
                                          \begin{array}{c}
                                            Z_1 \\
                                            \vdots \\
                                            Z_k \\
                                             Z_{k+1}+t\widetilde Z_{k+1}\\
                                            \vdots \\
                                            Z_K+t\widetilde Z_K \\
                                          \end{array}
                                        \right)
\end{equation}
where $\widetilde Z_{k}$, $2\leq k\leq K$, is random Gaussian noise variables with mean zero and variance $\tilde\sigma_{kk}^2 >0$,
and $\widetilde Z_{k}$ is independent from all other random variables. Here,
$t$ is a large enough constant (i.e., $t\rightarrow\infty$), so that $\widetilde Y_{k+1},\ldots,\widetilde Y_K$ are fully corrupted by the noise.
We define a new random Gaussian vector $\mathbf Z_V(k)=(Z_1,\ldots,Z_k,Z_{k+1}+t\widetilde Z_{k+1},\ldots,Z_K+t\widetilde Z_K)^T$ and rewrite \eqref{secret_eq} as
\begin{equation}\label{eqmodel}
    \mathbf V_k=\mathbf H \mathbf X+\mathbf Z_V(k), \text{ for }k=1,\ldots,K.
\end{equation}
Since the channel matrix $\mathbf H$ is invertible, we have
\begin{equation}
    \mathbf H^{-1}\mathbf V_k=\mathbf X+\mathbf H^{-1}\mathbf Z_V(k).
\end{equation}
By treating $\mathbf H^{-1}\mathbf V_k$ as the new channel output $\mathbf V'_k$ at virtual receiver $\mathbf V_k$,
 and define a new random Gaussian noise vector $\mathbf Z_V'(k)=\mathbf H^{-1}\mathbf Z_V(k)$, we have
\begin{equation}
    \mathbf V'_k=\mathbf X+\mathbf Z_V'(k),
\end{equation}
which is equivalent to the model in \eqref{eqmodel}.

We now state a lemma that provides the order of the covariance matrices of $\mathbf Z'_V(k)$, denoted by $\mathbf \Sigma_V'(k)$, for $1\leq k\leq K$.

\begin{lemma}\label{lemmaorder}
Let $\mathbf Z_V'(k)$, $1\leq k\leq K$, be random Gaussian vectors defined as above. The covariance matrices of $\mathbf Z_V'(k)$  satisfy the following ordering property:
\begin{equation}\label{order_matrix}
    \mathbf \Sigma_V'(1)\succeq \mathbf \Sigma_V'(2)\succeq\ldots\succeq \mathbf \Sigma_V'(K).
\end{equation}
\end{lemma}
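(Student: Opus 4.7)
The plan is to establish the chain of matrix inequalities one link at a time, by showing $\mathbf{\Sigma}_V'(k)\succeq\mathbf{\Sigma}_V'(k+1)$ for each $k=1,\ldots,K-1$ and then chaining. Since $\mathbf{\Sigma}_V'(k)=\mathbf{H}^{-1}\mathbf{\Sigma}_V(k)(\mathbf{H}^{-1})^{T}$ and congruence by any matrix preserves the positive semidefinite ordering (indeed, $\mathbf{A}\succeq\mathbf{B}$ implies $\mathbf{M}\mathbf{A}\mathbf{M}^{T}\succeq\mathbf{M}\mathbf{B}\mathbf{M}^{T}$ for every $\mathbf{M}$), it is enough to prove the analogous ordering for the unrotated noise covariances, namely $\mathbf{\Sigma}_V(k)\succeq\mathbf{\Sigma}_V(k+1)$.

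To compute the difference $\mathbf{\Sigma}_V(k)-\mathbf{\Sigma}_V(k+1)$, I would first observe that the vectors $\mathbf{Z}_V(k)$ and $\mathbf{Z}_V(k+1)$ agree in every coordinate except the $(k+1)$-th, where the former has $Z_{k+1}+t\widetilde{Z}_{k+1}$ and the latter has $Z_{k+1}$. Because $\widetilde{Z}_{k+1}$ is independent of every $Z_j$ and of every $\widetilde{Z}_j$ with $j\neq k+1$, replacing $Z_{k+1}$ by $Z_{k+1}+t\widetilde{Z}_{k+1}$ in the $(k+1)$-th slot leaves all off-diagonal covariances in the $(k+1)$-th row and column unchanged, and augments the diagonal $(k+1,k+1)$ entry by exactly $t^{2}\tilde\sigma_{k+1,k+1}^{2}$; all remaining entries of the covariance matrix are unaffected. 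A short case check over indices $(i,j)$ in the three regions $i,j\le k$, one index equal to $k+1$ with the other $\le k$ or $\ge k+2$, and $i,j\ge k+2$ confirms this, yielding
\[
\mathbf{\Sigma}_V(k)-\mathbf{\Sigma}_V(k+1)=t^{2}\tilde\sigma_{k+1,k+1}^{2}\,\mathbf{e}_{k+1}\mathbf{e}_{k+1}^{T},
\]
a rank-one positive semidefinite matrix, where $\mathbf{e}_{k+1}$ denotes the $(k+1)$-th standard basis vector.

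Conjugating by $\mathbf{H}^{-1}$ then gives
\[
\mathbf{\Sigma}_V'(k)-\mathbf{\Sigma}_V'(k+1)=t^{2}\tilde\sigma_{k+1,k+1}^{2}\,(\mathbf{H}^{-1}\mathbf{e}_{k+1})(\mathbf{H}^{-1}\mathbf{e}_{k+1})^{T}\succeq\mathbf{0},
\]
so $\mathbf{\Sigma}_V'(k)\succeq\mathbf{\Sigma}_V'(k+1)$. Iterating over $k=1,\ldots,K-1$ produces the chain claimed in \eqref{order_matrix}. The only real work is the coordinatewise bookkeeping isolating exactly where the covariance changes; I do not anticipate any conceptual obstacle, and the fact that $t$ is taken large elsewhere in the paper is not exploited here — any $t>0$ (together with the standing assumption $\tilde\sigma_{k+1,k+1}^{2}>0$) already suffices to secure the monotonicity.
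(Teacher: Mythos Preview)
Your proposal is correct and follows essentially the same approach as the paper: both compute the difference $\mathbf{\Sigma}_V'(k)-\mathbf{\Sigma}_V'(k+1)$ by isolating the single coordinate where $\mathbf{Z}_V(k)$ and $\mathbf{Z}_V(k+1)$ differ, obtaining the rank-one matrix $\mathbf{H}^{-1}\bigl(t^{2}\tilde\sigma_{k+1,k+1}^{2}\,\mathbf{e}_{k+1}\mathbf{e}_{k+1}^{T}\bigr)(\mathbf{H}^{-1})^{T}\succeq\mathbf{0}$. Your presentation is slightly more explicit about the congruence argument and the coordinatewise bookkeeping, but the substance is identical.
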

\begin{proof}
For any $1\leq k\leq K-1$, the covariance matrix of $\mathbf Z_V'(k)$ is given by
\begin{equation}
\begin{split}
    \mathbf \Sigma_V'(k)&=Cov(\mathbf H^{-1}\mathbf Z_V(k))\\
    &=E[\mathbf H^{-1}\mathbf Z_V(k)\mathbf Z_V(k)^T(\mathbf H^{-1})^T]\\
    &=\mathbf H^{-1}E[\mathbf Z_V(k)\mathbf Z_V(k)^T](\mathbf H^{-1})^T\\
    &=\mathbf H^{-1}\left(
        \begin{array}{cccccc}
          \sigma_{11}^2 & \cdots & \cdots & \cdots &\cdots & \sigma_{1K}^2 \\
          \vdots & \ddots &   &   &   & \vdots \\
          \vdots &   & \sigma_{kk}^2 &   &   & \vdots \\
          \vdots &   &   & \sigma_{k+1,k+1}^2+t^2\tilde{\sigma}_{k+1,k+1}^2 &   & \vdots \\
          \vdots &   &   &   & \ddots & \vdots \\
          \sigma_{K1}^2 & \ldots & \ldots & \ldots & \ldots & \sigma_{KK}^2+t^2\tilde{\sigma}_{KK}^2 \\
        \end{array}
      \right)(\mathbf H^{-1})^T
\end{split}
\end{equation}

Hence,
\begin{equation}\label{mtr}
\begin{split}
    &\mathbf \Sigma_V'(k)-\mathbf \Sigma_V'(k+1)\\
    =&\mathbf H^{-1}\left(
        \begin{array}{ccc}
          \mathbf 0 & \mathbf 0 & \mathbf 0 \\
          \mathbf 0 & t^2\tilde{\sigma}_{k+1,k+1}^2 & \mathbf 0 \\
          \mathbf 0 & \mathbf 0 & \mathbf 0 \\
        \end{array}
      \right)(\mathbf H^{-1})^T
\end{split}
\end{equation}
It is clear that $\mathbf \Sigma_V'(k)-\mathbf \Sigma_V'(k+1)$ is a positive semi-definite matrix and hence $\mathbf \Sigma_V'(k)\succeq \mathbf \Sigma_V'(k+1)$, for $1\leq k\leq K-1$. This concludes the proof.
\end{proof}

Therefore, by designing virtual receivers, we reformulate the problem of secret sharing via the MISO broadcast channel into the problem of secure communication over the degraded MIMO broadcast channel described in Section \ref{subsec:model}. It can also be seen that the requirements of the secret sharing problem is equivalent to the layered decoding and secrecy requirements for the communication problem. It is also due to the secret sharing requirements and Lemma \ref{lemmaorder}, the degradedness condition in the equivalent MIMO channel naturally holds. Thus, the secret sharing capacity region equals the secrecy capacity region of the degraded MIMO broadcast channel.
Thus, applying Theorem \ref{MIMO} we obtain the following secret sharing capacity region.
\begin{corollary}
The capacity region for the secret sharing problem described above contains rate tuples $(R_1,R_2,\ldots,R_K)$ satisfying
\begin{flalign}\label{secret_capacityregion}
        R_1&\leq \frac{1}{2}\log\frac{|\mathbf{\Sigma}_V'(1)+\mathbf{S}|}{|\mathbf{\Sigma}_V'(1)+\mathbf{S}_1|} \nn\\
        R_k&\leq \underset{t\rightarrow\infty}{\lim}\frac{1}{2}\log \frac{|\mathbf{\Sigma}_V'(k)+\mathbf{S}_{k-1}|}{|\mathbf{\Sigma}_V'(k)+\mathbf{S}_k|}-\frac{1}{2}\log\frac{|\mathbf{\Sigma}_V'(k-1)+\mathbf{S}_{k-1}|}{|\mathbf{\Sigma}_V'(k-1)+\mathbf{S}_k|}, \quad \text{ for }2\leq k\leq K-1, \nn\\
        R_K&\leq \underset{t\rightarrow\infty}{\lim} \frac{1}{2}\log\frac{|\mathbf{\Sigma}_V'(K)+\mathbf{S}_{K-1}|}{|\mathbf{\Sigma}_V'(K)|}-\frac{1}{2}\log\frac{|\mathbf{\Sigma}_V'(K-1)+\mathbf{S}_{K-1}|}{|\mathbf{\Sigma}_V'(K-1)|},
\end{flalign}
for some $\mathbf 0\preceq\mathbf{S}_{K-1}\preceq \mathbf{S}_{K-2}\preceq \ldots\preceq \mathbf{S}_2\preceq \mathbf{S}_1\preceq \mathbf{S}$.
\end{corollary}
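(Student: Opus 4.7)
The plan is to derive this corollary as a direct application of Theorem \ref{MIMO} to the equivalent degraded Gaussian MIMO broadcast channel constructed in \eqref{eqmodel}, identifying its noise covariance matrices with the sequence $\{\mathbf{\Sigma}_V'(k)\}_{k=1}^K$ built above. The degradedness ordering required by Theorem \ref{MIMO} is already supplied by Lemma \ref{lemmaorder}, and the input resource constraint $E[\mathbf{XX}^\top]\preceq \mathbf{S}$ is shared by both problems. Hence, once the two problems are shown to have the same capacity region in the limit $t\to\infty$, the bounds in \eqref{secret_capacityregion} follow by substituting $\mathbf{\Sigma}_k\mapsto\mathbf{\Sigma}_V'(k)$ in \eqref{eq:capacity_mimo} and taking $t\to\infty$.

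First I would argue that, for every fixed finite $t$, any achievable rate tuple for the equivalent virtual MIMO channel is also achievable for the original MISO secret sharing problem. Any codebook meeting the layered decoding and secrecy constraints for the virtual receivers $\mathbf{V}_1,\ldots,\mathbf{V}_K$ automatically meets the corresponding secret sharing requirements, because the sharing group $(Y_1,\ldots,Y_k)$ is a sub-vector of $\mathbf{V}_k$: decoding capability can only weaken when information is discarded, and the leakage from the smaller observation can only be smaller. Consequently, the secret sharing capacity region contains the rate region of Theorem \ref{MIMO} (with $\mathbf{\Sigma}_k = \mathbf{\Sigma}_V'(k)$) for every $t$, and hence contains its limit as $t\to\infty$.

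Conversely, I would show that any secret sharing code yields, in the limit $t\to\infty$, a code meeting the layered decoding and secrecy constraints for the virtual MIMO channel. For decoding, virtual receiver $\mathbf{V}_k$ can simply ignore the injected noisy components $\widetilde{Y}_{k+1},\ldots,\widetilde{Y}_K$ and recover $(W_1,\ldots,W_k)$ from $(Y_1,\ldots,Y_k)$ exactly as the original sharing group does. For secrecy, the key observation is that as $t\to\infty$ the mutual information between $(W_{k+1},\ldots,W_K)$ and the augmented observation $\mathbf{V}_k^n$ converges to the mutual information between $(W_{k+1},\ldots,W_K)$ and $(Y_1^n,\ldots,Y_k^n)$, because the extra coordinates $Z_j+t\widetilde{Z}_j$ for $j>k$ become effectively independent of the signal. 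Hence every achievable secret sharing tuple must satisfy the outer bound of Theorem \ref{MIMO} in this limit, matching the inner bound.

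The main technical obstacle will be justifying the $t\to\infty$ limit rigorously while preserving the standard (blocklength-asymptotic) secrecy definition. On the achievability side, the codes supplied by Theorem \ref{MIMO} for each fixed $t$ yield vanishing decoding error and vanishing leakage only as the blocklength $n\to\infty$, so one must either interchange the limits in $t$ and $n$ carefully or, more cleanly, let $t=t_n\to\infty$ slowly enough with $n$ that the bounds of Theorem \ref{MIMO} remain tight and both error and leakage vanish simultaneously. On the converse side, uniform control in $t$ of the mutual information terms appearing in the single-letter outer bound is needed so that the bound passes cleanly to the limit. Once these limit arguments are in place, substituting the explicit $t$-dependent $\mathbf{\Sigma}_V'(k)$ from \eqref{mtr} into \eqref{eq:capacity_mimo} and sending $t\to\infty$ yields exactly the expressions in \eqref{secret_capacityregion}.
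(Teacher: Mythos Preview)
Your proposal is correct and follows essentially the same route as the paper: reduce the secret sharing problem to the degraded Gaussian MIMO broadcast channel via the virtual receivers $\mathbf{V}_k'$, invoke Lemma~\ref{lemmaorder} for the degradedness ordering, and then substitute $\mathbf{\Sigma}_k \mapsto \mathbf{\Sigma}_V'(k)$ in Theorem~\ref{MIMO}. The paper simply asserts that ``the secret sharing capacity region equals the secrecy capacity region of the degraded MIMO broadcast channel'' and applies Theorem~\ref{MIMO} directly, without dwelling on the $t\to\infty$ limit; your two-sided containment argument (achievability for every fixed $t$, converse in the limit) and your concern about interchanging the $n\to\infty$ and $t\to\infty$ limits make explicit the justification that the paper leaves implicit.
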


\section{Conclusion}\label{sec:con}

In this paper, we have proposed a novel approach based on information theoretic secrecy to solving secret sharing problems. The basic idea is to reformulate the secret sharing problem into a secure communication problem, and then apply techniques for the latter case to solving the secret sharing problem. In order to illustrate the basic idea, we have first studied the classic problem of sharing one secret among a set of participants, and provided a solution by reformulating the secret sharing system into an equivalent compound wiretap channel with multiple legitimate receivers and multiple eavesdroppers. We have then demonstrated the power of our approach by solving a more complicated problem of sharing multiple secrets with layered sharing requirements. We have characterized the secret sharing capacity region by reformulating the problem into the problem of the degraded broadcast MIMO channel with layered decoding and secrecy, for which we have characterized the secrecy capacity region. Our approach can be generally applicable to solving various secret sharing problems, which can be difficult using traditional number theoretic tools. For example, various multiple-secret sharing problems can be reformulated into secure communication problems with multiple confidential messages via compound MIMO broadcast channels, and hence the information theoretic techniques and results developed in existing literature, e.g., \cite{Weingarten2009,Mari2009,Mohammad2009}, can be applied to solving these secret sharing problems.

\vspace{0.5in}
\appendix

\noindent {\Large \textbf{Appendix}}
\section{Achievability Proof of Theorem \ref{th:capadmc}}\label{achiv_dmc}

The achievability proof is based on stochastic encoding and superposition coding. We use random codes and fix a joint probability distribution $P_{U_1\cdots U_{K-1}X}$ satisfying the Markov chain condition given in~\eqref{Markov}.
Let $T_{\epsilon}^{n}(P_{U_1\ldots U_{K-1}XY_1\ldots Y_K})$ denote the strongly
jointly $\epsilon$-typical set based on the fixed distribution.


{\em Random codebook generation:}
In the following achievability proof, for notational convenience, we write $X$ as $U_K$, i.e., $P_{U_1\cdots U_{K-1}X}=P_{U_1\cdots U_{K}}$.
\begin{list}{$\bullet$}{\topsep=0ex \leftmargin=6.5mm \rightmargin=0mm \itemsep=0mm}

\item Generate $2^{nR_1}$ independent and identically distributed
(i.i.d.) $u^n_1$ with distribution $\prod_{i=1}^{n}p(u_{1,i})$.
Index these codewords as $u^n_1(w_1)$, $w_1 \in [1,2^{nR_1}]$.
%

\item For each $u_{k-1}^n(w_1,w_2,l_2,\cdots,w_{k-1},l_{k-1})$, $k=2,\cdots, K$, generate $2^{n\widetilde{R}_k}$ i.i.d.\ sequences $u_k^n$ with distribution $\prod_{i=1}^{n}p(u_{k,i}|u_{k-1,i})$. Partition these sequences into $2^{nR_k}$ bins, each with $2^{n(\widetilde{R}_k-R_k)}$ sequences. We use $w_k\in[1:2^{nR_k}]$ to denote the bin index, and $l_k\in[1:2^{n(\widetilde{R}_k-R_k)}]$ to denote the index within each bin. Hence each $u_k^n$ is indexed by $(w_1,w_2,l_2,\cdots,w_k,l_k)$.


\end{list}
The chosen codebook is revealed to the transmitter and all receivers.


{\em Encoding:} To send a message tuple $(w_1,w_2,\ldots,w_K)$, for each $2\leq k\leq K$, the encoder randomly generate $l_k$ $\in [1:2^{n(\widetilde{R}_k-R_k)}]$ based on a uniform distribution. The transmitter then sends $u_K^n(w_1,w_2,l_2,\cdots,w_K,l_K)$.

{\em Decoding:} For $k=1,\ldots,K$, receiver $k$ claims that $(\hat{w}_1,\cdots,\hat{w}_k)$ is sent, if there exists a unique tuple $(\hat{w}_1,\hat{w}_2,\hat{l}_2,\cdots,\hat{w}_k,\hat{l}_k)$ such that
\begin{flalign}
&(u_1^n(\hw_1), u_2^n(\hw_1,\hw_2,\hl_2),\ldots, u_k^n(\hat{w}_1,\hat{w}_2,\hat{l}_2,\cdots,\hat{w}_k,\hat{l}_k),y_k^n) \in T_{\epsilon}^{n}(P_{U_1\ldots U_k Y_k}).
\end{flalign}
Otherwise, it declares an error.

{\em Analysis of error probability:} By the law of large numbers and the packing lemma, it can be shown that if the following inequalities are satisfied, receiver $k$ (for $k=1,\ldots,K$) can decode messages $w_1,w_2,\ldots,w_k$ with a vanishing error probability:
\begin{equation}\label{pe}
\begin{split}
R_1&\leq I(U_1;Y_1),\\
\widetilde{R}_k&\leq I(U_k;Y_k|U_{k-1}),\quad\text{for } 2\leq k\leq K.
\end{split}
\end{equation}

{\em Analysis of leakage rate:}
We first compute an average of the leakage rate over the random codebook ensemble as follows. For convenience, we let $W^k=(W_1,\ldots,W_k)$, $W_{k+1}^K=(W_{k+1},\ldots,W_K)$.
\begin{flalign}
&I(W_{k+1}^K;Y_k^n|W^k,\cC)\nn\\
&=I(W^K,L^K;Y_k^n|\cC)-I(W^k,L^K;Y_k^n|W_{k+1}^K,\cC)+H(W^k|Y_k^n,\cC)-H(W^k|Y_k^n,W_{k+1}^K,\cC)\nn\\
&\overset{(a)}{\leq} I(W^K,L^K;Y_k^n|\cC)-I(W^k,L^K;Y_k^n|W_{k+1}^K,\cC)+n\epsilon_n\nn\\
&\overset{(b)}{\leq} I(U_K^n;Y_k^n|\cC)-I(W^k,L^K;Y_k^n|W_{k+1}^K,\cC)+n\epsilon_n\nn\\
&=I(U_K^n;Y_k^n|\cC)-H(W^k,L^K|W_{k+1}^K,\cC)+H(W^k,L^K|Y_k^n,W_{k+1}^K,\cC)+n\epsilon_n,
\end{flalign}
where step (a) follows from Fano's inequality, step (b) follows from the Markov chain $(W^K, L^K)\rightarrow (U_K^n,\cC)\rightarrow Y_k^n$.

We bound the above three terms one by one. For the first term, we have
\begin{flalign}
&I(U_K^n;Y_k^n|\cC)\nn\\
&\overset{(a)}{=}I(U_k^n,U_K^n;Y_k^n|\cC)\nn\\
&=I(U_k^n;Y_k^n|\cC)+I(U_K^n;Y_k^n|U_k^n,\cC)\nn\\
&\leq H(U_k^n|\cC)+I(U_K^n;Y_k^n|U_k^n,\cC)\nn\\
&=n\sum_{j=1}^k\widetilde{R}_j+H(Y_k^n|U_k^n,\cC)-H(Y_k^n|U_K^n,U_k^n,\cC)\nn\\
&=n\sum_{j=1}^k\widetilde{R}_j+\sum_{j=1}^n H(Y_{k,j}|U_k^n,Y_k^{j-1},\cC)-\sum_{j=1}^n H(Y_{k,j}|U_K^n,U_k^n,Y_k^{j-1},\cC)\nn\\
&\overset{(b)}{\leq}n\sum_{j=1}^k\widetilde{R}_j+\sum_{j=1}^nH(Y_{k,j}|U_{k,j})-\sum_{j=1}^nH(Y_{k,j}|U_{K,j})\nn\\
&= n\sum_{j=1}^k\widetilde{R}_j+nH(Y_k|U_k)-nH(Y_k|U_K)\nn\\
&=n\sum_{j=1}^k\widetilde{R}_j+nI(U_K;Y_k|U_k).
\end{flalign}
Where (a) follows from the Markov chain $U_k^n\rightarrow U_K^n\rightarrow Y_k^n$, (b) follows from the fact that $H(Y_{k,j}|U_k^n,Y_k^{j-1},\cC)\leq H(Y_{k,j}|U_{k,j})$ and from the Markov chain $(U_k^n, U_K^{j-1},U_{K,j+1}^n,Y_k^{j-1},\cC)\rightarrow U_{K,j}\rightarrow Y_{k,j}$.

For the second term, due to the independence of $W_1,\cdots,W_K$ and $L_1,\cdots,L_K$, we have
\begin{flalign}
&H(W^k,L^K|W_{k+1}^K,\cC)=\sum_{j=1}^kn\widetilde{R}_j+\sum_{j=k+1}^Kn(\widetilde{R}_j-R_j).
\end{flalign}
We now bound the last term as follows.
\begin{flalign}
&H(W^k,L^K|Y_k^n,W_{k+1}^K,\cC)\nn\\
&=H(W^k|Y_k^n,W_{k+1}^K,\cC)+H(L^K|Y_k^n,W^K,\cC)\nn\\
&\overset{(a)}{\leq} H(L_{k+1}^K|Y_k^n,W^K,L^k,\cC)+2n\epsilon_n\nn\\
&=\sum_{j=k+1}^KH(L_j|Y_k^n,W^K,L^{j-1},\cC)+2n\epsilon_n\nn\\
&\overset{(b)}{=}\sum_{j=k+1}^KH(L_j|Y_k^n,W^K,L^{j-1},U_{j-1}^n,\cC)+2n\epsilon_n\nn\\
&\leq\sum_{j=k+1}^KH(L_j|Y_k^n,U_{j-1}^n,W_j)+2n\epsilon_n\nn\\
&\overset{(c)}{\leq}\sum_{j=k+1}^Kn(\tilde{R}_j-R_j-I(U_j;Y_k|U_{j-1}))+n\epsilon_n'\nn\\
&\overset{(d)}{=}\sum_{j=k+1}^Kn(\tilde{R}_j-R_j)-I(U_K;Y_k|U_k)+n\epsilon_n',
\end{flalign}
where (a) follows from the chain rule and Fano's inequality, (b) follows from the fact that $U_{j-1}^n$ is a function of $(\cC, W^{j-1},L^{j-1})$, and (c) follows due to Lemma \ref{lemma:nit} with the condition that $\tilde{R}_j-R_j\geq I(U_j;Y_k|U_{j-1})$, and (d) follows from the Markov chain $U_1\rightarrow U_2\rightarrow \cdots\rightarrow U_K\rightarrow Y_k$.


\begin{lemma}\label{lemma:nit}
If $\tilde{R}_j-R_j\geq I(U_j;Y_k|U_{j-1})$ for $k+1\leq j\leq K$, then
\[\frac{1}{n}H(L_j|Y_k^n,U_{j-1}^n,W_j)\leq \widetilde{R}_j-R_j-I(U_j;Y_k|U_{j-1})+\epsilon''_n.\]
\end{lemma}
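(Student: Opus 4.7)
The statement is a one-layer wiretap leakage lemma: given the cloud center $U_{j-1}^n$ and the bin index $W_j$, the residual within-bin randomness $L_j$ should retain only about $n(\widetilde R_j - R_j - I(U_j;Y_k|U_{j-1}))$ bits of entropy after the weaker receiver observes $Y_k^n$. My plan is to prove this by a virtual sub-binning construction combined with Fano's inequality, generalizing the single-message argument in \cite{Gamal2012}.

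First, I would introduce a virtual sub-binning within each bin. Uniformly at random, partition the $2^{n(\widetilde R_j - R_j)}$ codewords indexed by $L_j$ in the bin labeled by $(W_1,\ldots,W_j,L_2,\ldots,L_{j-1})$ into $2^{n(I(U_j;Y_k|U_{j-1}) - \delta)}$ sub-bins of equal size. The hypothesis $\widetilde R_j - R_j \geq I(U_j;Y_k|U_{j-1})$ makes this well-defined for sufficiently small $\delta>0$. Write $L_j=(L_j',L_j'')$, where $L_j'$ is the sub-bin index and $L_j''$ is the index within the sub-bin; then trivially $H(L_j'') \leq n(\widetilde R_j - R_j - I(U_j;Y_k|U_{j-1}) + \delta)$.

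Next, I would show that a joint-typicality decoder recovers $L_j'$ from $(Y_k^n, U_{j-1}^n, W_j)$ with vanishing error probability. Conditioned on $U_{j-1}^n$, the codewords $u_j^n$ in the bin specified by $W_j$ are i.i.d.\ from $\prod_i p(u_{j,i}|u_{j-1,i})$, and the sub-bin rate is $I(U_j;Y_k|U_{j-1}) - \delta$, strictly below the mutual information of the effective channel $p(y_k|u_j,u_{j-1})$. The packing lemma therefore yields vanishing error, and Fano's inequality gives $H(L_j'|Y_k^n,U_{j-1}^n,W_j) \leq n\epsilon_n'$ for some $\epsilon_n' \to 0$. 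Combining via the chain rule,
\begin{align*}
H(L_j|Y_k^n,U_{j-1}^n,W_j) &\leq H(L_j'|Y_k^n,U_{j-1}^n,W_j) + H(L_j''|Y_k^n,U_{j-1}^n,W_j,L_j') \\
&\leq n\epsilon_n' + n\bigl(\widetilde R_j - R_j - I(U_j;Y_k|U_{j-1}) + \delta\bigr),
\end{align*}
and letting $\delta,\epsilon_n' \to 0$ absorbs the slack into a single $\epsilon_n''$, yielding the claimed bound.

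The main obstacle I anticipate is the careful conditioning in the second step: $U_{j-1}^n$ is itself part of the random superposition codebook, so one must verify that, averaged over the code, codewords in the "wrong" sub-bins remain jointly typical with $(U_{j-1}^n, Y_k^n)$ only with probability roughly $2^{-nI(U_j;Y_k|U_{j-1})}$. By taking the virtual sub-binning to be drawn uniformly and independently of both the codebook and all higher-layer indices, this reduces to a standard packing-lemma calculation, after which the rest of the proof is routine bookkeeping.
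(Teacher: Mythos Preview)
Your sub-binning is set up the wrong way around, and the packing-lemma step fails as written. You partition bin $W_j$ into $2^{n(I(U_j;Y_k|U_{j-1})-\delta)}$ sub-bins and try to decode the sub-bin index $L_j'$ from $(Y_k^n,U_{j-1}^n,W_j)$. But the packing lemma controls the number of \emph{competing codewords}, not the number of sub-bins: the codewords in the wrong sub-bins number roughly $2^{n(\widetilde R_j-R_j)}$, and since $\widetilde R_j-R_j\ge I(U_j;Y_k|U_{j-1})$ by hypothesis, exponentially many of them will be jointly typical with $(U_{j-1}^n,Y_k^n)$. These typical codewords are spread across many sub-bins, so no joint-typicality decoder can recover $L_j'$ with vanishing error. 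Your own closing remark---that wrong-sub-bin codewords are typical with probability about $2^{-nI(U_j;Y_k|U_{j-1})}$---already implies this: the union bound over $\approx 2^{n(\widetilde R_j-R_j)}$ candidates does not close.

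The fix is to reverse the roles: partition into $2^{n(\widetilde R_j-R_j-I(U_j;Y_k|U_{j-1})+\delta)}$ sub-bins each of size $2^{n(I(U_j;Y_k|U_{j-1})-\delta)}$, let $L_j'$ be the sub-bin index and $L_j''$ the index within the sub-bin, bound $H(L_j'\mid Y_k^n,U_{j-1}^n,W_j)\le H(L_j')\le n(\widetilde R_j-R_j-I(U_j;Y_k|U_{j-1})+\delta)$ trivially, and use the packing lemma to decode $L_j''$ \emph{given} $(Y_k^n,U_{j-1}^n,W_j,L_j')$, since now each sub-bin contains only $2^{n(I(U_j;Y_k|U_{j-1})-\delta)}$ codewords. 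Fano then gives $H(L_j''\mid Y_k^n,U_{j-1}^n,W_j,L_j')\le n\epsilon_n'$ and the chain rule finishes. For comparison, the paper takes a different route altogether: it directly bounds the list size $N$ of codewords in bin $W_j$ that are jointly typical with $(U_{j-1}^n,Y_k^n)$, uses a second-moment (Chebyshev) argument to show $N\lesssim 2^{n(\widetilde R_j-R_j-I(U_j;Y_k|U_{j-1}))}$ with high probability, and then bounds the conditional entropy by $\log N$ plus vanishing terms. The corrected sub-binning argument and the paper's list-size argument are both standard and essentially equivalent in strength; the paper's version avoids the artificial partition and the extra $\delta$, while yours (once fixed) is arguably more modular since it reduces directly to the packing lemma and Fano.
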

\begin{proof}
See Appendix \ref{proofnit}.
\end{proof}

Combining the analysis of the three terms together, we conclude that as $n\rightarrow\infty$ for $1\leq k\leq K-1$, $\frac{1}{n}I(W_{k+1}^K;Y_k^n|W^k,\cC)\rightarrow 0$, if
\begin{flalign}\label{eqv}
\widetilde{R}_k-R_k\geq I(U_k;Y_{k-1}|U_{k-1}), \quad\quad \text{ for }2\leq k\leq K.
\end{flalign}

It is also clear that the sum of the error probability and the leakage rates averaged over the codebook ensemble converges to zero as $n\rightarrow \infty$. Hence, there exists one codebook such that the error probability and the leakage rate converge to zero as $n\rightarrow \infty$.

Combining the bounds in \eqref{pe} and \eqref{eqv}, we obtain that the rate tuple $(R_1,\cdots,R_K)$ is achievable if
\begin{flalign}
R_1&\leq I(U_1;Y_1),\nn\\
R_k&\leq I(U_k;Y_k|U_{k-1})-I(U_k;Y_{k-1}|U_{k-1}),\text{ for }2\leq k\leq K.
\end{flalign}


\section{Proof of Lemma \ref{lemma:nit}}\label{proofnit}
We first bound $\frac{1}{n}H(L_j|Y_k^n,U_{j-1}^n,w_j)$ for any $w_j$, and hence, $\frac{1}{n}H(L_j|Y_k^n,U_{j-1}^n,W_j)$ is bounded.

Fix $L_j=l_j$ and a joint typical sequence $(u_{j-1}^n,y_k^n)\in T_\epsilon^{(n)}(U_{j-1},Y_k)$. We define
\begin{flalign}
N(w_j,l_j,u_{j-1}^n,y_k^n):=|\{\tilde{l}_j\neq l_j:(U_j^n(w_j,\tilde{l}_j),u_{j-1}^n,y_k^n)\in T_\epsilon^{(n)}\}|.
\end{flalign}
It can be shown that the expectation and variance of $N$ satisfy the following inequalities:
\begin{flalign}\nn
2^{n(\tilde{R}_j-R_j)-nI(U_j;Y_k|U_{j-1})-n\delta_n(\epsilon)-n\epsilon_n}\leq E(N(w_j,l_j,u_{j-1}^n,y_k^n))\leq 2^{n(\tilde{R}_j-R_j)-nI(U_j;Y_k|U_{j-1})+n\delta_n(\epsilon)-n\epsilon_n}\\
Var(N(w_j,l_j,u_{j-1}^n,y_k^n))\leq 2^{n(\tilde{R}_j-R_j)-nI(U_j;Y_k|U_{j-1})+n\delta_n(\epsilon)-n\epsilon_n},
\end{flalign}
where $\delta_n(\epsilon),\epsilon_n\rightarrow 0$ as $n\rightarrow \infty$.

We next define the random event,
\[\varepsilon(w_j,l_j,u_{j-1}^n,y_k^n):=\{N(w_j,l_j,u_{j-1}^n,y_k^n)\geq 2^{n(\tilde{R}_j-R_j-I(U_j;Y_k|U_{j-1})+\delta_n(\epsilon)-\epsilon_n/2)+1}\}.\]
Using Chebyshev's inequality, we obtain
\begin{flalign}\label{var}
&P\left(\varepsilon(w_j,l_j,u_{j-1}^n,y_k^n)\right)\nn\\
&=P\left(N(w_j,l_j,u_{j-1}^n,y_k^n)\geq 2^{n(\tilde{R}_j-R_j-I(U_j;Y_k|U_{j-1})+\delta_n(\epsilon)-\epsilon_n/2)+1}\right)\nn\\
&\leq P\left(N(w_j,l_j,u_{j-1}^n,y_k^n)\geq E(N(w_j,l_j,u_{j-1}^n,y_k^n))+2^{n(\tilde{R}_j-R_j-I(U_j;Y_k|U_{j-1})+\delta_n(\epsilon)-\epsilon_n/2)}\right)\nn\\
&\leq P\left(|N(w_j,l_j,u_{j-1}^n,y_k^n)-E(N(w_j,l_j,u_{j-1}^n,y_k^n))|\geq 2^{n(\tilde{R}_j-R_j-I(U_j;Y_k|U_{j-1})+\delta_n(\epsilon)-\epsilon_n/2)}\right)\nn\\
&\leq \frac{Var(N(w_j,l_j,u_{j-1}^n,y_k^n))}{2^{2n(\tilde{R}_j-R_j-I(U_j;Y_k|U_{j-1})+\delta_n(\epsilon)-\epsilon_n/2)}}\nn\\
&\leq\frac{1}{2^{n(\tilde{R}_j-R_j-I(U_j;Y_k|U_{j-1})+\delta_n(\epsilon))}}
\end{flalign}
which goes to zero as $n\rightarrow \infty$ if
$\tilde{R}_j-R_j\geq I(U_j;Y_k|U_{j-1})$. This implies that
\[P\bigg(\varepsilon(w_j,l_j,u_{j-1}^n,y_k^n)\bigg)\rightarrow 0\]
as $n\rightarrow \infty$.

For each message $w_j$, we define the following random variable and event:
\begin{flalign}
N(w_j):=|\{\tilde{l}_j:(U_j^n(w_j,\tilde{l}_j),Y_k^n,U_{j-1}^n)\in T_\epsilon^{(n)},\tilde{l}_j\neq L_j\}|\nn\\
\varepsilon(w_j):=\{N(w_j)\geq 2^{n(\tilde{R}_j-R_j-I(U_j;Y_k|U_{j-1})+\delta_n(\epsilon)-\epsilon_n/2)+1}\}\nn
\end{flalign}
Finally, define the indicator random variable $E(w_j):=0$ if $(U_j^n(w_j,L_j),Y_k^n,U_{j-1}^n)\in T_\epsilon^{(n)}$ and $\varepsilon(w_j)^c$ occurs; and $E(w_j):=1$, otherwise.
Therefore, we have
\begin{flalign}\label{eq43}
P\left(E(w_j)=1\right)&\leq P\left( (U_j^n(w_j,L_j),U_{j-1}^n,Y_k^n)\notin T_\epsilon^{(n)}\right)+P\left(\varepsilon(w_j)\right).
\end{flalign}
It is clear that the first term in \eqref{eq43} goes to zero as $n\rightarrow \infty$. For the second term in \eqref{eq43}, we have
\begin{flalign}
&P\left(\varepsilon(w_j)\right)\nn\\
&\leq \sum_{(u_{j-1}^n,y_k^n)\in T_\epsilon^{(n)}}P\left(u_{j-1}^n,y_k^n\right)P\left(\varepsilon(w_j)|u_{j-1}^n,y_k^n\right)+P\left((U_{j-1}^n,Y_k^n)\notin T_\epsilon^{(n)}\right)\nn\\
&=\sum_{(u_{j-1}^n,y_k^n)\in T_\epsilon ^{(n)}}\sum_{l_j}P\left(u_{j-1}^n,y_k^n\right)P\left(l_j|u_{j-1}^n,y_k^n\right)P\left(\varepsilon(w_j)|u_{j-1}^n,y_k^n,l_j\right)+P\left((U_{j-1}^n,Y_k^n)\notin T_\epsilon^{(n)}\right)\nn\\
&\rightarrow 0,\quad\text{ if }\tilde{R}_j-R_j\geq I(U_j;Y_k|U_{j-1}).
\end{flalign}
Therefore,
\begin{flalign}\label{lemma1}
&H(L_j|w_j,U_{j-1}^n,Y_k^n)\nn\\
&\leq H(L_j,E(w_j)|w_j,U_{j-1}^n,Y_k^n)\nn\\
&\leq H(E(w_j))+H(L_j|w_j,U_{j-1}^n,Y_k^n,E(w_j))\nn\\
&\leq 1+P\left(E(w_j)=1\right)H(L_j|w_j,Y_k^n,U_{j-1}^n,E(w_j)=1)+H(L_j|w_j,Y_k^n,U_{j-1}^n,E(w_j)=0)\nn\\
&\leq 1+P\left(E(w_j)=1\right)n(\tilde{R}_j-R_j)+\log 2^{n(\tilde{R}_j-R_j-I(U_j;Y_k|U_{j-1})+\delta(\epsilon)-\epsilon/2)+1} \nn \\
& =1+n(\tilde{R}_j-R_j)P\left(E(w_j)=1\right)+n(\tilde{R}_j-R_j-I(U_j;Y_k|U_{j-1})+\delta(\epsilon)-\epsilon/2)+1
\end{flalign}
Following from \eqref{lemma1}, we obtain,
\begin{flalign}
\underset{n\rightarrow \infty}{\lim}\frac{1}{n}H(L_j|w_j,U_{j-1}^n,Y_k^n)\leq \tilde{R}_j-R_j-I(U_j;Y_k|U_{j-1})+\delta'(\epsilon),
\end{flalign}
where $\delta'(\epsilon)\rightarrow 0$ as $n\rightarrow \infty$.
This concludes the proof.

\section{Converse Proof of Theorem \ref{th:capadmc}}\label{conv_dmc}

By Fano's inequality and the secrecy requirements, we have the following inequalities
\begin{flalign}
H(W_k|Y_k^n)\leq n\epsilon_n&,\quad\text{for }1\leq k\leq K\\
\frac{1}{n}I(W_{k+1},\ldots,W_K;Y_k^n|W_1,\ldots,W_{k})\leq\epsilon_n&,\quad \text{for }1\leq k\leq K-1.\label{con1}
\end{flalign}

We let $Y_k^{i-1}:=(Y_{k,1},\ldots,Y_{k,i-1})$, and $Y_{k,i+1}^n:=(Y_{k,i+1},\ldots,Y_{k,n})$.
We set \small{$U_{k,i}:=\{W_1,\ldots,W_k,\\Y_k^{i-1},Y_{k-1,i+1}^n\}$} for $k=1,\ldots,K$ where $Y_0^n=\phi$. It is easy to verify that $(U_{1,i},\ldots,U_{K-1,i},X_i)$ satisfy the following Markov chain condition:
\begin{equation}\label{eq51}
U_{1,i}\rightarrow U_{2,i}\rightarrow\ldots\rightarrow U_{K-1,i}\rightarrow X_i\rightarrow Y_{K,i}\rightarrow\ldots\rightarrow Y_{1,i},\quad\text{for }1\leq i\leq n.
\end{equation}

We first bound the rate $R_1$. Since there is no secrecy constraint for $W_1$, following the standard steps, we obtain the following bound:
\begin{flalign}
nR_1&=H(W_1)=I(W_1;Y_1^n)+H(W_1|Y_1^n)\nn\\
&\leq I(W_1;Y_1^n)+n\epsilon_n=\sum_{i=1}^nI(W_1;Y_{1i}|Y_{1}^{i-1})+n\epsilon_n\nn\\
&\leq \sum_{i=1}^nI(W_1,Y_{1}^{i-1};Y_{1i})+n\epsilon_n=\sum_{i=1}^nI(U_{1,i};Y_{1,i})+n\epsilon_n.
\end{flalign}

For the message $W_k$, $2\leq k\leq K$, we derive the following bound:
\begin{flalign}
nR_k=&H(W_k)=H(W_k|W^{k-1})\nn\\
=&I(W_k;Y_k^n|W^{k-1})+H(W_k|W^{k-1},Y_k^n)\nn\\
\overset{(a)}{\leq}&I(W_k;Y_k^n|W^{k-1})+n\epsilon_n\nn\\
\overset{(b)}{\leq}& I(W_k;Y_k^n|W^{k-1})+2n\epsilon_n-I(W_k;Y_{k-1}^n|W^{k-1})\nn\\
=&\sum_{i=1}^nI(W_k;Y_{k,i}|W^{k-1},Y_k^{i-1})+2n\epsilon_n-\sum_{i=1}^nI(W_k;Y_{k-1,i}|W^{k-1},Y_{k-1,i+1}^n)\nn\\
=&\sum_{i=1}^n\bigg[I(W_k,Y_{k-1,i+1}^n;Y_{k,i}|W^{k-1},Y_k^{i-1})-I(W_k,Y_k^{i-1};Y_{k-1,i}|W^{k-1},Y_{k-1,i+1}^n)\nn\\
&-I(Y_{k-1,i+1}^n;Y_{k,i}|W^k,Y_k^{i-1})+I(Y_k^{i-1};Y_{k-1,i}|W^k,Y_{k-1,i+1}^n)\bigg]+2n\epsilon_n\nn\\
\overset{(c)}{=}&\sum_{i=1}^n\bigg[I(Y_{k-1,i+1}^n;Y_{k,i}|W^{k-1},Y_k^{i-1})+I(W_k;Y_{k,i}|W^{k-1},Y_k^{i-1},Y_{k-1,i+1}^n)\nn\\
&-I(Y_k^{i-1};Y_{k-1,i}|W^{k-1},Y_{k-1,i+1}^n)-I(W_k;Y_{k-1,i}|W^{k-1},Y_k^{i-1},Y_{k-1,i+1}^n)\bigg]+2n\epsilon_n\nn\\
\overset{(d)}{=}&\sum_{i=1}^n\bigg[I(W_k;Y_{k,i}|W^{k-1},Y_k^{i-1},Y_{k-1,i+1}^n)-I(W_k;Y_{k-1,i}|W^{k-1},Y_k^{i-1},Y_{k-1,i+1}^n)\bigg]+2n\epsilon_n\nn\\
=&\sum_{i=1}^n\bigg[I(W_k,Y_k^{i-1},Y_{k-1,i+1}^n;Y_{k,i}|W^{k-1})-I(Y_k^{i-1},Y_{k-1,i+1}^n;Y_{k,i}|W^{k-1})\nn\\
&-I(W_k,Y_k^{i-1},Y_{k-1,i+1}^n;Y_{k-1,i}|W^{k-1})+I(Y_k^{i-1},Y_{k-1,i+1}^n;Y_{k-1,i}|W^{k-1})\bigg]+2n\epsilon_n\nn\\
=&\sum_{i=1}^n\bigg[I(W_k,Y_k^{i-1},Y_{k-1,i+1}^n,Y_{k-1}^{i-1},Y_{k-2,i+1}^n;Y_{k,i}|W^{k-1})\nn\\
&-I(W_k,Y_k^{i-1},Y_{k-1,i+1}^n,Y_{k-1}^{i-1},Y_{k-2,i+1}^n;Y_{k-1,i}|W^{k-1})\nn\\
&-I(Y_k^{i-1},Y_{k-1,i+1}^n;Y_{k,i}|W^{k-1})+I(Y_k^{i-1},Y_{k-1,i+1}^n;Y_{k-1,i}|W^{k-1})\bigg]+2n\epsilon_n\nn
\end{flalign}
\begin{flalign}\label{converse}
=&\sum_{i=1}^n\bigg[I(W_k,Y_k^{i-1},Y_{k-1,i+1}^n;Y_{k,i}|W^{k-1},Y_{k-1}^{i-1},Y_{k-2,i+1}^n)\nn\\
&-I(W_k,Y_k^{i-1},Y_{k-1,i+1}^n;Y_{k-1,i}|W^{k-1},Y_{k-1}^{i-1},Y_{k-2,i+1}^n)+I(Y_{k-1}^{i-1},Y_{k-2,i+1}^n;Y_{k,i}|W^{k-1})\nn\\
&-I(Y_{k-1}^{i-1},Y_{k-2,i+1}^n;Y_{k-1,i}|W^{k-1})-I(Y_k^{i-1},Y_{k-1,i+1}^n;Y_{k,i}|W^{k-1})\nn\\
&+I(Y_k^{i-1},Y_{k-1,i+1}^n;Y_{k-1,i}|W^{k-1})\bigg]+2n\epsilon_n\nn\\
\overset{(e)}{=}&\sum_{i=1}^n\bigg[I(W_k,Y_k^{i-1},Y_{k-1,i+1}^n;Y_{k,i}|W^{k-1},Y_{k-1}^{i-1},Y_{k-2,i+1}^n)\nn\\
&-I(W_k,Y_k^{i-1},Y_{k-1,i+1}^n;Y_{k-1,i}|W^{k-1},Y_{k-1}^{i-1},Y_{k-2,i+1}^n)\nn\\
&-I(Y_k^{i-1},Y_{k-1,i+1}^n;Y_{k,i}|W^{k-1},Y_{k-1}^{i-1},Y_{k-2,i+1}^n)\nn\\
&+I(Y_k^{i-1},Y_{k-1,i+1}^n;Y_{k-1,i}|W^{k-1},Y_{k-1}^{i-1},Y_{k-2,i+1}^n)\bigg]+2n\epsilon_n\nn\\
\overset{(f)}{\leq}&\sum_{i=1}^n\bigg[I(W_k,Y_k^{i-1},Y_{k-1,i+1}^n;Y_{k,i}|W^{k-1},Y_{k-1}^{i-1},Y_{k-2,i+1}^n)\nn\\
&-I(W_k,Y_k^{i-1},Y_{k-1,i+1}^n;Y_{k-1,i}|W^{k-1},Y_{k-1}^{i-1},Y_{k-2,i+1}^n)\bigg]+2n\epsilon_n \nn\\
=&\sum_{i=1}^n\bigg[I(U_{k,i};Y_{k,i}|U_{k-1,i})-I(U_{k,i};Y_{k-1,i}|U_{k-1,i})\bigg]+2n\epsilon_n
\end{flalign}
%
where $(a)$ follows from Fano's inequality, $(b)$ follows from \eqref{con1}, i.e., the secrecy constraint,
$(c)$ and $(d)$ follow from the sum identity property in \cite[Lemma 7]{Csiszar1978}, and (e) and (f) follows from the degradedness condition \eqref{eq:markov}.

For $k=K$, we further derive \eqref{converse},
\begin{flalign}
n R_K&\leq\sum_{i=1}^nI(U_{K,i};Y_{K,i}|U_{K-1,i})-I(U_{K,i};Y_{K-1,i}|U_{K-1,i})+2n\epsilon_n\nn\\
=&\sum_{i=1}^n I(U_{K,i},X_i;Y_{K,i}|U_{K-1,i})-I(U_{K,i},X_i;Y_{K-1,i}|U_{K-1,i})\nn\\
&-I(X_i;Y_{K,i}|U_{K,i})+I(X_i;Y_{K-1,i}|U_{K,i})+2n\epsilon_n \\
\leq&\sum_{i=1}^n I(X_i;Y_{K,i}|U_{K-1,i})-I(X_i;Y_{K-1,i}|U_{K-1,i})+2n\epsilon_n ,\nn
\end{flalign}
where the last step follows from \eqref{eq51} and \eqref{eq:markov}.
The proof of the converse is completed by defining a uniformly distributed random variable $Q\in\{1,\cdots,n\}$, and setting $U_k\triangleq (Q,U_{k,Q})$, $Y_k\triangleq Y_{k,Q}$, for $k\in[1:K]$, and $X\triangleq (Q,X_{Q})$.

\section{Proof of Converse of for Theorem \ref{thm:gaussian_channel}}\label{conv_gau}
We continue the bounds proved for the discrete memoryless channel in their single-letter forms.
%
We first bound $R_1$ as follows:
\begin{equation}\label{converse1}
\begin{split}
R_1\leq& I(U_1;Y_1)+\epsilon_n\leq\frac{1}{2}\log 2\pi e(P+N_1)-h(Y_1|U_1)+\epsilon_n,
\end{split}
\end{equation}
where $h(Y_1|U_1)$ will be bounded later.

Following from \eqref{converse}, we have,
\begin{flalign}\label{eq57}
R_2\leq& I(U_2;Y_2|U_1)-I(U_2;Y_1|U_1)+2\epsilon_n \nn\\
=&h(Y_2|U_1)-h(Y_1|U_1)-(h(Y_2|U_1,U_2)-h(Y_1|U_1,U_2))+2\epsilon_n
\end{flalign}
It can be shown as (140) in \cite{ekrem2012} that
\begin{equation}
\frac{1}{2}\log\frac{N_2}{N_1}\leq h(Y_2|U_1)-h(Y_1|U_1)\leq \frac{1}{2}\log\frac{P+N_2}{P+N_1}.
\end{equation}
Hence, there must exist an $\alpha_1$, $0\leq \alpha_1\leq1$ such that
\begin{equation}\label{ieq6}
h(Y_2|U_1)-h(Y_1|U_1)=\frac{1}{2}\log\frac{(1-\alpha_1)P+N_2}{(1-\alpha_1)P+N_1}.
\end{equation}
Due to the degradedness condition, it is clear that $-I(U_2;Y_2|U_1)+I(U_2;Y_1|U_1)\leq 0$, which implies
\begin{equation}
  h(Y_2|U_1,U_2)-h(Y_1|U_1,U_2)\leq h(Y_2|U_1)-h(Y_1|U_1).
\end{equation}
Hence, there must exist an $\alpha_2$, $0\leq \alpha_2\leq1-\alpha_1$, such that
\begin{equation}\label{ieq5}
  h(Y_2|U_1,U_2)-h(Y_1|U_1,U_2)=\frac{1}{2}\log\frac{(1-\alpha_1-\alpha_2)P+N_2}{(1-\alpha_1-\alpha_2)P+N_1}.
\end{equation}
Substituting
\eqref{ieq6} and \eqref{ieq5} into \eqref{eq57}, we obtain
\begin{equation}
  R_2\leq \frac{1}{2}\log\left(1+\frac{\alpha_2P}{(1-\alpha_1-\alpha_2)P+N_2}\right)-\frac{1}{2}\log\left(1+\frac{\alpha_2P}{(1-\alpha_1-\alpha_2)P+N_1}\right).
\end{equation}
In order to bound $h(Y_1|U_1)$, we note that $Y_1$ can be written as $Y_2+Z'$, where $Z'$ is Gaussian with
variance $N_1-N_2$. By the entropy power inequality \cite{cover1991}, we have
\begin{equation}\label{eq63}
    2^{2h(Y_1|U_1)}\geq2^{2h(Y_2|U_1)}+2^{2h(Z'|U_1)}.
\end{equation}
Using \eqref{ieq6} and \eqref{eq63}, we obtain
\begin{equation}\label{eq64}
  H(Y_1|U_1)\geq \frac{1}{2}\log(2\pi e((1-\alpha_1)P+N_1)).
\end{equation}
Substituting \eqref{eq64} into \eqref{converse1}, we obtain
\begin{equation}
  R_1\leq \frac{1}{2}\log\left(1+\frac{\alpha_1P}{(1-\alpha_1)P+N_1}\right).
\end{equation}

We next bound $R_3$ and have
\begin{equation}\label{eq66}
\begin{split}
R_3\leq & h(Y_3|U_1,U_2)-h(Y_2|U_1,U_2)\\
&-(h(Y_3|U_1,U_2,U_3)-h(Y_2|U_1,U_2,U_3))+2\epsilon_n\\
\end{split}
\end{equation}
We let $\sqrt{t}=\sqrt{\frac{N_2-N_3}{N_1-N_3}}$, and write $Z_2=Z_3+\sqrt{t}\widetilde{Z}$, where $\widetilde{Z}$  is Gaussian  with variance $N_1-N_3$. By the entropy power inequality in \cite{Costa1985},
\begin{equation}
\begin{split}
 &2^{2h(Y_2|U_1,U_2)}=2^{2h(Y_3+\sqrt{t}\widetilde{Z}|U_1,U_2)} \geq(1-t)2^{2h(Y_3|U_1,U_2)}+t2^{2h(Y_1|U_1,U_2)},
\end{split}
\end{equation}
which, jointly with \eqref{ieq5}, implies that
\begin{equation}\label{eq68}
\begin{split}
 & h(Y_3|U_1,U_2)-h(Y_2|U_1,U_2)  \leq \frac{1}{2}\log\frac{(1-\alpha_1-\alpha_2)P+N_3}{(1-\alpha_1-\alpha_2)P+N_2}.
\end{split}
\end{equation}
Using the same argument for obtaining \eqref{ieq5}, we can show that there exists $0<\alpha_3<1-\alpha_1-\alpha_2$, such that
\begin{equation}\label{eq69}
  \begin{split}
    &h(Y_3|U_1,U_2,U_3)-h(Y_2|U_1,U_2,U_3)    =\frac{1}{2}\log\frac{(1-\alpha_1-\alpha_2-\alpha_3)P+N_3}{(1-\alpha_1-\alpha_2-\alpha_3)P+N_2}.
  \end{split}
\end{equation}
Substituting \eqref{eq68} and \eqref{eq69} into \eqref{eq66}, we obtain the desired bound on $R_3$.

It can be seen that the technique for bounding $R_3$ can be extended to bound $R_4,\ldots,R_K$, and we hence complete the proof by noting that $P_k=\alpha_kP$ for $k=1,\ldots,K$.
%
\section{Proof of Converse for Theorem \ref{MIMO}}\label{conv_gau_mimo}
In this proof, we first introduce some necessary definitions and useful lemmas in the previous studies \cite{ekrem2012,ekrem2011secrecy}. We then present our main proof.
\subsection{Preliminaries}
\begin{definition}\cite{ekrem2012}
Let $(\mathbf{U},\mathbf X)$ be an arbitrarily correlated length-$n$ random vector pair with well defined densities. The conditional Fisher information matrix of $\mathbf{X}$ given $\mathbf U$ is defined as
\begin{equation}
    \mathbf{J}(\mathbf{X}|\mathbf U)=E[\mathbf{\rho}(\mathbf{X|U})\mathbf\rho(\mathbf{X|U} )^T]
\end{equation}
where the expectation is taken over the joint density $f(\mathbf{u,x})$, and the conditional score function $\rho(\mathbf{x|u} )$ is given by
\begin{equation}
\begin{split}
    \rho(\mathbf{x|u} )&=\nabla \log f(\mathbf{x|u})\\
    &=\left[\frac{\partial\log f_U(\mathbf{x|u})}{\partial x_1}\cdots\frac{\partial \log f_U(\mathbf{x|u})}{\partial x_n}\right]^T.
\end{split}
\end{equation}
\end{definition}


\begin{lemma}\cite[Theorem 11]{ekrem2012}\label{thm11}
Let $(\mathbf{Z_1,Z_2,Z_3,Z_4})$ be Gaussian random vectors with covariance matrices $\mathbf{\Sigma}_1$, $\mathbf{\Sigma}_2$, $\mathbf{\Sigma}_3$, $\mathbf{\Sigma}_4$, respectively, where
\begin{equation}\label{thms11}
    \mathbf{\Sigma}_4\preceq \mathbf{\Sigma}_3 \preceq \mathbf{\Sigma}_2 \preceq \mathbf{\Sigma}_1.
\end{equation}
Let $(\mathbf{U},\mathbf{X})$ be an arbitrarily dependent random vector pair, which is independent of the Gaussian random vectors $(\mathbf{Z}_1,\mathbf{Z}_2,\mathbf{Z}_3,\mathbf{Z}_4)$, and the second moment of $\mathbf{X}$ be constrained as $E[\mathbf{XX}^T]\preceq\mathbf{S}$. Then, for any feasible $(\mathbf{U},\mathbf{X})$, for any $\mathbf{\Sigma_1,\Sigma_2,\Sigma_3,\Sigma_4}$ satisfying the order in \eqref{thms11}, there exists a positive semidefinite matrix $\mathbf{K}^*$ such that $\mathbf K^*\preceq \mathbf S$, and
\begin{equation}
h(\mathbf{X}+\mathbf{Z}_2|\mathbf{U})-h(\mathbf{X}+\mathbf{Z}_3|\mathbf{U})=\frac{1}{2}\log\frac{|\mathbf{K}^*+\mathbf{\Sigma}_2|}{|\mathbf{K}^*+\mathbf{\Sigma}_3|},
\end{equation}
and
\begin{flalign}
h(\mathbf{X}+\mathbf{Z}_1|\mathbf{U})-h(\mathbf{X}+\mathbf{Z}_3|\mathbf{U})\leq\frac{1}{2}\log\frac{|\mathbf{K}^*+\mathbf{\Sigma}_1|}{|\mathbf{K}^*+\mathbf{\Sigma}_3|},\\
h(\mathbf{X}+\mathbf{Z}_3|\mathbf{U})-h(\mathbf{X}+\mathbf{Z}_4|\mathbf{U})\geq\frac{1}{2}\log\frac{|\mathbf{K}^*+\mathbf{\Sigma}_3|}{|\mathbf{K}^*+\mathbf{\Sigma}_4|}.
\end{flalign}
\end{lemma}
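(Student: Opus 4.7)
The lemma asks for a single covariance $\mathbf{K}^\star$ that simultaneously matches the conditional entropy difference on the middle noise interval $(\mathbf{\Sigma}_2, \mathbf{\Sigma}_3)$ and sandwiches the two adjacent differences on $(\mathbf{\Sigma}_1, \mathbf{\Sigma}_3)$ and $(\mathbf{\Sigma}_3, \mathbf{\Sigma}_4)$. My plan is to first construct $\mathbf{K}^\star$ via an intermediate-value / Gaussian-matching argument realizing the equality, and then propagate that matching outward to the adjacent intervals through the conditional matrix entropy power inequality (EPI), in conjunction with the degradation chain $\mathbf{\Sigma}_4 \preceq \mathbf{\Sigma}_3 \preceq \mathbf{\Sigma}_2 \preceq \mathbf{\Sigma}_1$.

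\textbf{Existence of $\mathbf{K}^\star$.} Define the continuous map $\phi(\mathbf{K}) := \tfrac{1}{2}\log\frac{|\mathbf{K}+\mathbf{\Sigma}_2|}{|\mathbf{K}+\mathbf{\Sigma}_3|}$ on the compact convex set $\{\mathbf{K}: \mathbf{0} \preceq \mathbf{K} \preceq \mathbf{S}\}$. I would first show that the actual entropy difference $D := h(\mathbf{X}+\mathbf{Z}_2|\mathbf{U}) - h(\mathbf{X}+\mathbf{Z}_3|\mathbf{U})$ lies in the interval $[\phi(\mathbf{S}), \phi(\mathbf{0})]$. The upper bound $D \leq \phi(\mathbf{0})$ follows from the decomposition
\begin{equation*}
h(\mathbf{X}+\mathbf{Z}_k|\mathbf{U}) = \tfrac{1}{2}\log\bigl|(2\pi e)\mathbf{\Sigma}_k\bigr| + I(\mathbf{X};\mathbf{X}+\mathbf{Z}_k|\mathbf{U}), \quad k\in\{2,3\},
\end{equation*}
combined with the data-processing inequality $I(\mathbf{X};\mathbf{X}+\mathbf{Z}_2|\mathbf{U}) \leq I(\mathbf{X};\mathbf{X}+\mathbf{Z}_3|\mathbf{U})$, which is valid because the $\mathbf{Z}_3$-channel is less noisy than the $\mathbf{Z}_2$-channel. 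The lower bound $D \geq \phi(\mathbf{S})$ follows from Costa's matrix EPI applied to the decomposition $\mathbf{X}+\mathbf{Z}_2 = (\mathbf{X}+\mathbf{Z}_3) + \mathbf{W}$ with $\mathbf{W} \sim \mathcal{N}(\mathbf{0}, \mathbf{\Sigma}_2-\mathbf{\Sigma}_3)$ independent of $(\mathbf{U},\mathbf{X},\mathbf{Z}_3)$, together with the Gaussian maximum-entropy upper bound on $h(\mathbf{X}+\mathbf{Z}_3|\mathbf{U})$ subject to $E[\mathbf{X}\mathbf{X}^\top]\preceq \mathbf{S}$. Continuity of $\phi$ along a path from $\mathbf{0}$ to $\mathbf{S}$ in the PSD cone---parametrized, after simultaneous diagonalization of $\mathbf{\Sigma}_2,\mathbf{\Sigma}_3,\mathbf{S}$ in a common basis, by sweeping one eigenvalue at a time---and the intermediate value theorem then deliver an admissible $\mathbf{K}^\star$ with $\phi(\mathbf{K}^\star)=D$.

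\textbf{Sandwich bounds.} With $\mathbf{K}^\star$ fixed, I would establish inequality~(2) by writing $\mathbf{Z}_1 = \mathbf{Z}_2 + \mathbf{W}_1$ with $\mathbf{W}_1 \sim \mathcal{N}(\mathbf{0}, \mathbf{\Sigma}_1 - \mathbf{\Sigma}_2)$ and differentiating $t \mapsto h(\mathbf{X}+\mathbf{Z}_2+\sqrt{t}\,\mathbf{W}_1|\mathbf{U})$ via de Bruijn's identity, which expresses the slope through a conditional Fisher information matrix. The worst additive noise / Fisher-information inequality dominates this slope by that of the Gaussian input whose covariance is matched at $\mathbf{K}^\star$, and integrating from $\mathbf{\Sigma}_2$ to $\mathbf{\Sigma}_1$ yields the claimed upper bound. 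Inequality~(3) is derived symmetrically by writing $\mathbf{Z}_3 = \mathbf{Z}_4 + \mathbf{W}_2$ with $\mathbf{W}_2\sim\mathcal{N}(\mathbf{0},\mathbf{\Sigma}_3-\mathbf{\Sigma}_4)$, where the Fisher-information comparison now flips direction (removing noise increases information faster under the matched Gaussian than under any other input of the same covariance), giving the lower bound.

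\textbf{Main obstacle.} The principal difficulty is the matrix-valued nature of the existence step: a one-dimensional IVT along $t\mathbf{S}$ produces only $\mathbf{K}^\star = t^\star\mathbf{S}$, whereas the lemma permits general PSD matrices, so one must either perform the pre-diagonalization above and apply IVT eigenvalue-by-eigenvalue, or invoke a Brouwer-type fixed-point argument in the PSD cone. A second delicate point is that Steps~1 and~2 demand the tight slope-matching to hold with the \emph{same} $\mathbf{K}^\star$ on both sides of $\mathbf{\Sigma}_3$; this is the essence of the channel-enhancement idea and requires the full degradation chain, not just the local order at $\mathbf{\Sigma}_3$, to guarantee that the matched Gaussian remains an extremal input along the entire noise-interpolation path.
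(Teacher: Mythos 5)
The paper offers no proof of this lemma at all --- it is imported verbatim from \cite[Theorem 11]{ekrem2012} --- so the relevant benchmark is the Fisher-information machinery the paper assembles around it (Lemmas~\ref{lemma2011}--\ref{lemma8}) and redeploys in Appendix~\ref{conv_gau_mimo}. Measured against that, your proposal has two genuine gaps. First, the lower bound $D\geq\phi(\mathbf S)$ in your existence step does not follow from the entropy power inequality in the vector case. The (conditional) EPI gives $2^{2h(\mathbf X+\mathbf Z_2|\mathbf U)/n}\geq 2^{2h(\mathbf X+\mathbf Z_3|\mathbf U)/n}+2^{2h(\mathbf W)/n}$, and inserting the Gaussian maximum-entropy bound on $h(\mathbf X+\mathbf Z_3|\mathbf U)$ yields only
\begin{equation*}
D\;\geq\;\frac{n}{2}\log\frac{|\mathbf S+\mathbf\Sigma_3|^{1/n}+|\mathbf\Sigma_2-\mathbf\Sigma_3|^{1/n}}{|\mathbf S+\mathbf\Sigma_3|^{1/n}},
\end{equation*}
which, by Minkowski's determinant inequality $|\mathbf S+\mathbf\Sigma_2|^{1/n}\geq|\mathbf S+\mathbf\Sigma_3|^{1/n}+|\mathbf\Sigma_2-\mathbf\Sigma_3|^{1/n}$, is \emph{weaker} than $\phi(\mathbf S)=\frac{1}{2}\log\frac{|\mathbf S+\mathbf\Sigma_2|}{|\mathbf S+\mathbf\Sigma_3|}$ except when the matrices involved are proportional. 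This looseness of the EPI in the matrix setting is precisely why the cited proof goes through conditional Fisher information instead: Lemma~\ref{lemma2011} gives $D\geq\frac{1}{2}\log\frac{|\mathbf J(\mathbf X+\mathbf Z_2|\mathbf U)^{-1}|}{|\mathbf J(\mathbf X+\mathbf Z_2|\mathbf U)^{-1}+\mathbf\Sigma_3-\mathbf\Sigma_2|}$, and Lemma~\ref{lemaup} together with the monotonicity in Lemma~\ref{continuous} turns this into $D\geq\phi(\mathbf S)$.

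Second, and more structurally, a $\mathbf K^\star$ produced by a generic intermediate-value or Brouwer argument satisfies only the equality and gives you no purchase on inequalities (2) and (3); your closing paragraph names this obstacle but does not overcome it. (The simultaneous-diagonalization sub-route is also unavailable: three generic symmetric matrices $\mathbf\Sigma_2,\mathbf\Sigma_3,\mathbf S$ share no common eigenbasis.) The actual construction takes $\mathbf K^\star$ on the specific line segment joining $\mathbf J(\mathbf X+\mathbf Z_3|\mathbf U)^{-1}-\mathbf\Sigma_3$ and $\mathbf J(\mathbf X+\mathbf Z_2|\mathbf U)^{-1}-\mathbf\Sigma_2$, applying the intermediate value theorem to the monotone scalar function $r(t)$ of Lemma~\ref{continuous} along that segment (Lemma~\ref{lemma8} guarantees the segment's direction matrix is positive semidefinite); this is exactly how $\mathbf S_1$ is built in Appendix~\ref{conv_gau_mimo}. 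The payoff is that $\mathbf K^\star$ ends up sandwiched between the two Fisher-information matrices, and it is this sandwiching --- propagated through Lemma~\ref{lemma4} and the de Bruijn integration you sketch --- that makes the \emph{same} $\mathbf K^\star$ valid on the adjacent noise intervals $[\mathbf\Sigma_2,\mathbf\Sigma_1]$ and $[\mathbf\Sigma_4,\mathbf\Sigma_3]$. Without anchoring $\mathbf K^\star$ to the Fisher information, your claim that ``the matched Gaussian dominates the slope'' along the whole interpolation path is an assertion rather than a proof.
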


\begin{lemma}\label{lemma2011}\cite{ekrem2011secrecy}
Let $(\mathbf{U,X})$ be an arbitrarily correlated random vector pair, and the second moment of $X$ is constrained as $E[\mathbf{XX}^T]\preceq S$. Let $\mathbf Z_1$, $\mathbf Z_2$ be Gaussian random vectors that are independent from $(\mathbf{U,X})$,  and have mean zero and covariance matrices $\mathbf\Sigma_1$, $\mathbf\Sigma_2$, respectively, where $\mathbf \Sigma_1\succeq\mathbf \Sigma_2$. Then
\begin{flalign}\label{eq:lemma3}
    \frac{1}{2}\log\frac{|\mathbf{J}(\mathbf X +\mathbf Z_1|\mathbf U)^{-1}|}{|\mathbf J(\mathbf X+\mathbf Z_1|\mathbf U)^{-1}+\mathbf\Sigma_2-\mathbf\Sigma_1|}&\leq h(\mathbf X+\mathbf Z_1|\mathbf U)-h(\mathbf X+\mathbf Z_2|\mathbf U)\nn\\
    &\leq \frac{1}{2}\log\frac{|\mathbf J(\mathbf X+\mathbf Z_2|\mathbf U)^{-1}+\mathbf\Sigma_1-\mathbf\Sigma_2|}{|\mathbf J(\mathbf X+\mathbf Z_2|\mathbf U)^{-1}|}.
\end{flalign}
\end{lemma}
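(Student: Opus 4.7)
The plan is to bridge the entropy difference and the conditional Fisher information by applying the matrix de Bruijn identity along a straight-line interpolation of noise covariances, and then to control the Fisher information at each point on the path via the matrix Fisher information (Stam) inequality. Concretely, for $t\in[0,1]$ introduce an auxiliary Gaussian noise $\mathbf Z_t$, independent of $(\mathbf U,\mathbf X)$ and of $\mathbf Z_1,\mathbf Z_2$, with covariance $\mathbf{\Sigma}_t=\mathbf{\Sigma}_2+t(\mathbf{\Sigma}_1-\mathbf{\Sigma}_2)$, so that $\mathbf Z_0$ matches $\mathbf Z_2$ and $\mathbf Z_1$ matches $\mathbf Z_1$ in covariance. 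Since differential entropy is covariance-determined for the noise contribution, one may write $\mathbf Z_t=\mathbf Z_2+\sqrt{t}\,\mathbf W$ with $\mathbf W$ Gaussian of covariance $(\mathbf\Sigma_1-\mathbf\Sigma_2)$, or dually $\mathbf Z_t=\mathbf Z_1-\sqrt{1-t}\,\mathbf W'$ in the appropriate sense.

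The first step is to apply the conditional matrix de Bruijn identity, obtained by invoking the standard unconditional version at each fixed realization $\mathbf U=\mathbf u$ and taking expectation, to express
\begin{equation}
h(\mathbf X+\mathbf Z_1\mid\mathbf U)-h(\mathbf X+\mathbf Z_2\mid\mathbf U)=\frac{1}{2}\int_0^1\mathrm{tr}\bigl(\mathbf J(\mathbf X+\mathbf Z_t\mid\mathbf U)(\mathbf\Sigma_1-\mathbf\Sigma_2)\bigr)\,dt.
\end{equation}
The second step is to bound the conditional Fisher information $\mathbf J(\mathbf X+\mathbf Z_t\mid\mathbf U)$ at each $t$ in two ways using the matrix Stam inequality applied to the independent decompositions $\mathbf X+\mathbf Z_t=(\mathbf X+\mathbf Z_2)+\widetilde{\mathbf W}_t$ with $\mathrm{Cov}(\widetilde{\mathbf W}_t)=t(\mathbf\Sigma_1-\mathbf\Sigma_2)$, and $\mathbf X+\mathbf Z_1=(\mathbf X+\mathbf Z_t)+\widetilde{\mathbf W}'_t$ with $\mathrm{Cov}(\widetilde{\mathbf W}'_t)=(1-t)(\mathbf\Sigma_1-\mathbf\Sigma_2)$. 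These yield, respectively,
\begin{equation}
\mathbf J(\mathbf X+\mathbf Z_t\mid\mathbf U)^{-1}\succeq \mathbf J(\mathbf X+\mathbf Z_2\mid\mathbf U)^{-1}+t(\mathbf\Sigma_1-\mathbf\Sigma_2),\qquad
\mathbf J(\mathbf X+\mathbf Z_t\mid\mathbf U)^{-1}\preceq \mathbf J(\mathbf X+\mathbf Z_1\mid\mathbf U)^{-1}-(1-t)(\mathbf\Sigma_1-\mathbf\Sigma_2),
\end{equation}
which, by monotone matrix inversion, convert directly into an upper and a lower PSD bound on $\mathbf J(\mathbf X+\mathbf Z_t\mid\mathbf U)$ itself.

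The third step is to substitute these PSD bounds into the de Bruijn integral and evaluate using the elementary identity $\frac{d}{ds}\log\det(\mathbf A+s\mathbf B)=\mathrm{tr}\bigl((\mathbf A+s\mathbf B)^{-1}\mathbf B\bigr)$. Integrating with $\mathbf A=\mathbf J(\mathbf X+\mathbf Z_2\mid\mathbf U)^{-1}$ and $\mathbf B=\mathbf\Sigma_1-\mathbf\Sigma_2$ on one side, and after the change of variables $s=1-t$ with $\mathbf A=\mathbf J(\mathbf X+\mathbf Z_1\mid\mathbf U)^{-1}$ on the other, produces the stated upper and lower bounds, respectively. Positive-definiteness of the matrix $\mathbf J(\mathbf X+\mathbf Z_1\mid\mathbf U)^{-1}-(1-t)(\mathbf\Sigma_1-\mathbf\Sigma_2)$ along the path (required so that the integrand is well-defined) follows directly from the Stam bound $\mathbf J(\mathbf X+\mathbf Z_1\mid\mathbf U)^{-1}\succeq (\mathbf\Sigma_1-\mathbf\Sigma_2)$.

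The main technical obstacle is to carefully justify the conditional versions of both the de Bruijn identity and Stam's inequality, including the interchange of differentiation with integration over $\mathbf U$ and verification of sufficient regularity of the conditional densities $f_{\mathbf X+\mathbf Z_t\mid\mathbf U}$; once these are in place, the rest of the argument reduces to matrix-monotonicity manipulations and a one-line log-determinant integration. Handling potentially singular or non-regular $\mathbf U$-conditionals is typically done by a Gaussian-smoothing limit, which is the one place where care is needed to invoke the appropriate continuity of conditional Fisher information.
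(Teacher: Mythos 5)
Your proposal is correct and is essentially the same argument the paper relies on: the paper does not prove this lemma itself but points to part B (and, for conditioning, part D) of Section V of \cite{ekrem2011secrecy}, where the entropy difference is likewise written as a de Bruijn--type integral of the conditional Fisher information along the covariance path $\mathbf\Sigma_2+t(\mathbf\Sigma_1-\mathbf\Sigma_2)$ and the integrand is sandwiched via the matrix Stam inequality before integrating the log-determinant. The only point worth tightening is that strict positive definiteness of $\mathbf J(\mathbf X+\mathbf Z_1|\mathbf U)^{-1}-(1-t)(\mathbf\Sigma_1-\mathbf\Sigma_2)$ needs the slightly stronger chain $\mathbf J(\mathbf X+\mathbf Z_1|\mathbf U)^{-1}\succeq\mathbf J(\mathbf X+\mathbf Z_2|\mathbf U)^{-1}+(\mathbf\Sigma_1-\mathbf\Sigma_2)$ together with $\mathbf J(\mathbf X+\mathbf Z_2|\mathbf U)^{-1}\succeq\mathbf\Sigma_2\succ\mathbf 0$, not just the semidefinite bound you quote.
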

\begin{proof}
The proof of the unconditioned version of Lemma \ref{lemma2011} is given in \cite{ekrem2011secrecy} in part B of Section V. The proof can be generalized to the conditioned version by applying mathematical tools given in part D of Section V of \cite{ekrem2011secrecy}.
\end{proof}
\begin{lemma}\cite[Lemma 17]{ekrem2011secrecy}\label{lemma4}
Let $(\mathbf{V,U,X})$ be $n-$dimentional random vectors with well-defined densities. Moreover, assume that the partial derivatives of $f(\mathbf u|\mathbf{v,x})$ with respect to $x_i$, $i=1,\ldots,n$, exist and satisfy
\begin{equation}
\underset{1\leq i\leq n}{max}\left|\frac{\partial f(\mathbf u|\mathbf{x,v})}{\partial x_i}\right|\leq g(\mathbf u),
\end{equation}
for some integrable function $g(\mathbf u)$. If $(\mathbf{V,U,X})$ satisfy the Markov chain $\mathbf V\rightarrow \mathbf U\rightarrow \mathbf X$, then
\begin{equation}
\mathbf J(\mathbf X|\mathbf U)\succeq \mathbf J(\mathbf X|\mathbf V).
\end{equation}
\end{lemma}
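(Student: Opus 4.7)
The plan is to derive an explicit conditional-expectation representation of the score function $\rho(\mathbf{x}|\mathbf{v})$ in terms of $\rho(\mathbf{x}|\mathbf{u})$ using the Markov structure, and then to obtain the matrix inequality by a single application of conditional Jensen.

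The first step is to exploit the Markov chain $\mathbf{V}\rightarrow\mathbf{U}\rightarrow\mathbf{X}$, which gives $f(\mathbf{x}|\mathbf{u},\mathbf{v})=f(\mathbf{x}|\mathbf{u})$ and therefore
\begin{equation}
f(\mathbf{x}|\mathbf{v})=\int f(\mathbf{x}|\mathbf{u})\,f(\mathbf{u}|\mathbf{v})\,d\mathbf{u}.
\end{equation}
Under the domination hypothesis $\sup_{i}|\partial f(\mathbf{u}|\mathbf{x},\mathbf{v})/\partial x_i|\leq g(\mathbf{u})$ with $g$ integrable, the dominated convergence theorem permits differentiation under the integral. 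After converting this hypothesis into a bound on $\partial f(\mathbf{x}|\mathbf{u})/\partial x_i$ (weighted by $f(\mathbf{u}|\mathbf{v})$) via Bayes' rule, I would obtain
\begin{equation}
\nabla_{\mathbf{x}}f(\mathbf{x}|\mathbf{v})=\int f(\mathbf{x}|\mathbf{u})\,\rho(\mathbf{x}|\mathbf{u})\,f(\mathbf{u}|\mathbf{v})\,d\mathbf{u}.
\end{equation}
Dividing through by $f(\mathbf{x}|\mathbf{v})$ and recognizing $f(\mathbf{x}|\mathbf{u})f(\mathbf{u}|\mathbf{v})/f(\mathbf{x}|\mathbf{v})=f(\mathbf{u}|\mathbf{x},\mathbf{v})$ yields the key identity
\begin{equation}
\rho(\mathbf{x}|\mathbf{v})=E\!\left[\rho(\mathbf{X}|\mathbf{U})\,\big|\,\mathbf{X}=\mathbf{x},\mathbf{V}=\mathbf{v}\right].
\end{equation}

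With this representation in hand, the matrix inequality is immediate from conditional Jensen applied to quadratic forms of the score vector. For an arbitrary deterministic vector $\mathbf{a}\in\mathbb{R}^n$ the identity above gives
\begin{equation}
\mathbf{a}^{T}\rho(\mathbf{X}|\mathbf{V})=E\!\left[\mathbf{a}^{T}\rho(\mathbf{X}|\mathbf{U})\,\big|\,\mathbf{X},\mathbf{V}\right],
\end{equation}
and squaring both sides followed by conditional Jensen yields
\begin{equation}
\bigl(\mathbf{a}^{T}\rho(\mathbf{X}|\mathbf{V})\bigr)^{2}\leq E\!\left[\bigl(\mathbf{a}^{T}\rho(\mathbf{X}|\mathbf{U})\bigr)^{2}\,\big|\,\mathbf{X},\mathbf{V}\right].
\end{equation}
Taking unconditional expectations of both sides and using the tower property then delivers
\begin{equation}
\mathbf{a}^{T}\mathbf{J}(\mathbf{X}|\mathbf{V})\mathbf{a}\leq \mathbf{a}^{T}\mathbf{J}(\mathbf{X}|\mathbf{U})\mathbf{a},
\end{equation}
and since $\mathbf{a}$ is arbitrary this is precisely the claimed semidefinite ordering $\mathbf{J}(\mathbf{X}|\mathbf{U})\succeq\mathbf{J}(\mathbf{X}|\mathbf{V})$.

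The main obstacle I anticipate is purely technical, namely rigorously justifying the interchange of differentiation and integration used to obtain the score identity. The hypothesis on the integrable dominating function $g$ is tailored exactly to this step via dominated convergence; once that is set up cleanly, the Markov reduction and the one-line Jensen estimate finish the argument with no further work.
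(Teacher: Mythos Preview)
Your argument is correct and is essentially the standard proof of this Fisher-information data-processing inequality: the Markov chain yields the score identity $\rho(\mathbf{x}|\mathbf{v})=E[\rho(\mathbf{X}|\mathbf{U})\mid \mathbf{X}=\mathbf{x},\mathbf{V}=\mathbf{v}]$, and conditional Jensen on quadratic forms finishes it. Note, however, that the present paper does not supply its own proof of this lemma at all; it merely quotes the result from \cite[Lemma~17]{ekrem2011secrecy}, so there is nothing in the paper to compare your route against beyond observing that what you wrote is exactly the argument one finds in that reference.
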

\begin{lemma}\cite[Lemma 10]{ekrem2011secrecy}\label{continuous}
Consider the function
\begin{equation}
r(t)=\frac{1}{2}\log\frac{|\mathbf A+\mathbf B+t\mathbf\Delta|}{|\mathbf A+t\mathbf\Delta|},\text{ }0\leq t\leq 1.
\end{equation}
where $\mathbf A$, $\mathbf B$, $\mathbf\Delta$ are real symmetric matrices, and $\mathbf A\succ\mathbf0$, $\mathbf B\succeq \mathbf 0$, $\mathbf\Delta\succeq\mathbf0$. Then $r(t)$ is continuous and monotonically decreasing with respect to $t$.
\end{lemma}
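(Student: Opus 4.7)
The plan is to handle continuity and monotonicity separately, exploiting that both $\mathbf{A}+t\mathbf{\Delta}$ and $\mathbf{A}+\mathbf{B}+t\mathbf{\Delta}$ remain strictly positive definite throughout $[0,1]$ because $\mathbf A\succ\mathbf 0$, $\mathbf B,\mathbf\Delta\succeq \mathbf 0$, and $t\ge 0$. Continuity is then essentially immediate: the determinant of each matrix pencil is a polynomial in $t$ that is bounded away from zero on $[0,1]$, and $\log$ is continuous on $(0,\infty)$, so $r(t)$ is a composition of continuous functions.

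For monotonicity, my main route is algebraic rather than via calculus. I would factor
\[r(t)=\frac{1}{2}\log\bigl|\mathbf{I}+(\mathbf{A}+t\mathbf{\Delta})^{-1}\mathbf{B}\bigr|=\frac{1}{2}\log\bigl|\mathbf{I}+\mathbf{B}^{1/2}(\mathbf{A}+t\mathbf{\Delta})^{-1}\mathbf{B}^{1/2}\bigr|,\]
where the first step uses $|\mathbf A+\mathbf B+t\mathbf\Delta|=|\mathbf A+t\mathbf\Delta|\cdot|\mathbf I+(\mathbf A+t\mathbf\Delta)^{-1}\mathbf B|$ and the second uses Sylvester's determinant identity $|\mathbf I+\mathbf{XY}|=|\mathbf I+\mathbf{YX}|$ with $\mathbf B^{1/2}$ the positive-semidefinite square root of $\mathbf B$. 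The matrix inside the determinant is now symmetric and its $t$-dependence is entirely in $(\mathbf{A}+t\mathbf{\Delta})^{-1}$. For $0\le t_1\le t_2\le 1$ we have $\mathbf{A}+t_1\mathbf{\Delta}\preceq \mathbf{A}+t_2\mathbf{\Delta}$, and operator antitonicity of the matrix inverse on the positive-definite cone gives $(\mathbf{A}+t_1\mathbf{\Delta})^{-1}\succeq (\mathbf{A}+t_2\mathbf{\Delta})^{-1}$. Conjugating by $\mathbf B^{1/2}$ preserves Loewner ordering, and the log-determinant is monotone on the positive-definite cone, so $r(t_1)\ge r(t_2)$.

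As a cross-check one could differentiate: $\frac{d}{dt}\log|\mathbf{M}(t)|=\operatorname{tr}(\mathbf{M}(t)^{-1}\dot{\mathbf M}(t))$ yields
\[r'(t)=\tfrac{1}{2}\operatorname{tr}\bigl[\bigl((\mathbf{A}+\mathbf{B}+t\mathbf{\Delta})^{-1}-(\mathbf{A}+t\mathbf{\Delta})^{-1}\bigr)\mathbf{\Delta}\bigr].\]
Since $\mathbf B\succeq \mathbf 0$ makes the bracketed difference negative semidefinite, writing $\mathbf\Delta=\mathbf\Delta^{1/2}\mathbf\Delta^{1/2}$ and cycling under the trace expresses $r'(t)$ as the trace of a negative-semidefinite matrix, so $r'(t)\le 0$ pointwise, recovering the monotonicity conclusion.

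The main obstacle, though minor, is justifying monotonicity of the log-determinant on the positive-definite cone together with clean handling of $\mathbf B$ and $\mathbf\Delta$ being only PSD. For the former, if $\mathbf M_1\succeq \mathbf M_2\succ \mathbf 0$ then $\mathbf M_2^{-1/2}\mathbf M_1\mathbf M_2^{-1/2}\succeq \mathbf I$, so each of its eigenvalues is $\ge 1$ and $|\mathbf M_1|\ge |\mathbf M_2|$. For the latter, using the (always well-defined) PSD square root rather than Cholesky or inverse factorizations avoids any degeneracy issues when $\mathbf B$ or $\mathbf\Delta$ is singular. These are standard matrix-analysis facts and pose no serious technical difficulty.
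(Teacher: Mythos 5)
Your proof is correct and complete. Note that the paper does not actually prove this lemma; it is imported verbatim as Lemma 10 of the cited reference \cite{ekrem2011secrecy}, so there is no in-paper argument to compare against. Your algebraic route is sound at every step: the factorization $|\mathbf A+\mathbf B+t\mathbf\Delta|=|\mathbf A+t\mathbf\Delta|\cdot|\mathbf I+(\mathbf A+t\mathbf\Delta)^{-1}\mathbf B|$ together with Sylvester's identity produces a symmetric pencil $\mathbf I+\mathbf B^{1/2}(\mathbf A+t\mathbf\Delta)^{-1}\mathbf B^{1/2}\succeq\mathbf I$ whose Loewner-monotone dependence on $t$ (via antitonicity of the inverse and congruence by $\mathbf B^{1/2}$) immediately gives that $r$ is non-increasing, and your justification of determinant monotonicity on the positive-definite cone closes the one nontrivial auxiliary fact. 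Your derivative cross-check, $r'(t)=\tfrac{1}{2}\operatorname{tr}\bigl[\bigl((\mathbf A+\mathbf B+t\mathbf\Delta)^{-1}-(\mathbf A+t\mathbf\Delta)^{-1}\bigr)\mathbf\Delta\bigr]\le 0$, is essentially the calculation given in the cited source, so you have in effect supplied both the reference's argument and a cleaner algebraic alternative that avoids differentiability considerations altogether; the use of the PSD square root rather than an invertible factor is the right way to handle singular $\mathbf B$ and $\mathbf\Delta$. One cosmetic remark: ``monotonically decreasing'' here should be read as non-increasing (e.g.\ $r\equiv 0$ when $\mathbf B=\mathbf 0$), which is exactly what your argument establishes and all the paper needs.
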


\begin{lemma}\label{lemaup}\cite{ekrem2011secrecy}
Suppose $(\mathbf U,\mathbf X)$ is a random vector pair with arbitrary joint distribution and the second order moment of $\mathbf X$ satisfies $E(\mathbf{XX}^T)\preceq S$. Let $\mathbf Z$ be a random Gaussian vector that is independent from $\mathbf U$ and $\mathbf X$ and has mean zero and covariance $\mathbf \Sigma$. Then we have
\begin{equation}
\begin{split}
0\preceq\mathbf J(\mathbf X+\mathbf Z|\mathbf U)^{-1}-\mathbf \Sigma\preceq \mathbf S.
\end{split}
\end{equation}
\end{lemma}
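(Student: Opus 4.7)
The plan is to prove the two matrix inequalities separately using two standard tools: a complementary MMSE--Fisher-information identity for additive Gaussian noise, which handles the lower bound, and the matrix Cram\'er--Rao inequality combined with Jensen's inequality for the operator-convex inverse, which handles the upper bound. First I would reduce the claim to these two ingredients, then assemble them.

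For the lower bound $\mathbf{J}(\mathbf X + \mathbf Z \mid \mathbf U)^{-1} \succeq \mathbf\Sigma$, set $\mathbf Y = \mathbf X + \mathbf Z$ and write
\begin{equation*}
f(\mathbf y \mid \mathbf u) = \int f(\mathbf x \mid \mathbf u)\, \phi_{\mathbf\Sigma}(\mathbf y - \mathbf x)\, d\mathbf x,
\end{equation*}
where $\phi_{\mathbf\Sigma}$ is the Gaussian density with covariance $\mathbf\Sigma$. A Stein-type integration by parts against the Gaussian kernel yields the conditional score
\begin{equation*}
\rho(\mathbf y \mid \mathbf u) = -\mathbf\Sigma^{-1}\bigl(\mathbf y - E[\mathbf X \mid \mathbf Y = \mathbf y, \mathbf U = \mathbf u]\bigr) = -\mathbf\Sigma^{-1} E[\mathbf Z \mid \mathbf Y = \mathbf y, \mathbf U = \mathbf u].
\end{equation*}
Writing $\hat{\mathbf Z} = E[\mathbf Z \mid \mathbf Y,\mathbf U]$, taking outer products and expectations, and then applying the orthogonality decomposition $\mathbf\Sigma = E[\mathbf Z \mathbf Z^T] = E[\hat{\mathbf Z}\hat{\mathbf Z}^T] + \mathrm{mmse}(\mathbf Z \mid \mathbf Y,\mathbf U)$ produces the matrix identity
\begin{equation*}
\mathbf J(\mathbf X + \mathbf Z \mid \mathbf U) = \mathbf\Sigma^{-1} - \mathbf\Sigma^{-1}\, \mathrm{mmse}(\mathbf X \mid \mathbf X + \mathbf Z, \mathbf U)\, \mathbf\Sigma^{-1},
\end{equation*}
where I use $\mathrm{mmse}(\mathbf Z\mid\mathbf Y,\mathbf U)=\mathrm{mmse}(\mathbf X\mid\mathbf Y,\mathbf U)$ because $\mathbf X=\mathbf Y-\mathbf Z$ given $\mathbf Y$ is a deterministic shift. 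Since the MMSE matrix is positive semidefinite, $\mathbf J(\mathbf X + \mathbf Z \mid \mathbf U) \preceq \mathbf\Sigma^{-1}$, and inverting gives the lower bound.

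For the upper bound $\mathbf J(\mathbf X + \mathbf Z \mid \mathbf U)^{-1} - \mathbf\Sigma \preceq \mathbf S$, I would apply the matrix Cram\'er--Rao inequality at each realization $\mathbf U=\mathbf u$ to the location family $f(\mathbf y\mid \mathbf u)$, obtaining $\mathbf J(\mathbf Y \mid \mathbf U = \mathbf u)^{-1} \preceq \mathrm{Cov}(\mathbf Y \mid \mathbf U = \mathbf u)$. Averaging over $\mathbf U$ and invoking Jensen's inequality for the operator-convex map $\mathbf A \mapsto \mathbf A^{-1}$,
\begin{equation*}
\mathbf J(\mathbf Y \mid \mathbf U)^{-1} = \bigl(E[\mathbf J(\mathbf Y \mid \mathbf U = \mathbf u)]\bigr)^{-1} \preceq E\bigl[\mathbf J(\mathbf Y \mid \mathbf U = \mathbf u)^{-1}\bigr] \preceq E\bigl[\mathrm{Cov}(\mathbf Y \mid \mathbf U = \mathbf u)\bigr].
\end{equation*}
By the law of total covariance, $E[\mathrm{Cov}(\mathbf Y \mid \mathbf U)] \preceq \mathrm{Cov}(\mathbf Y)$, and by the independence of $\mathbf Z$ from $(\mathbf U,\mathbf X)$, $\mathrm{Cov}(\mathbf Y) = \mathrm{Cov}(\mathbf X) + \mathbf\Sigma \preceq E[\mathbf X \mathbf X^T] + \mathbf\Sigma \preceq \mathbf S + \mathbf\Sigma$. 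Chaining these PSD inequalities gives the upper bound.

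The only real technical obstacle will be rigorously justifying the differentiation under the integral sign used to derive the score formula; this is routine once one uses the smoothness of the Gaussian kernel $\phi_{\mathbf\Sigma}$ together with mild integrability of $f(\mathbf x\mid\mathbf u)$, which is the regularity implicit in the statement that $\mathbf J(\mathbf X+\mathbf Z\mid\mathbf U)$ is well defined. Everything else reduces to bookkeeping with the PSD order, so no further obstacles are expected.
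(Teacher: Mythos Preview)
Your argument is correct. The paper does not actually supply a proof of this lemma; it is simply quoted from \cite{ekrem2011secrecy}, so there is no in-paper proof to compare against line by line. Your route is self-contained and sound: the Cram\'er--Rao step combined with operator convexity of the inverse and the law of total covariance cleanly gives the upper bound, and the MMSE--Fisher identity gives the lower bound.

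One remark on the lower bound: a shorter path, and the one most in keeping with the toolkit the paper already imports from \cite{ekrem2011secrecy}, is to use the Fisher-information data-processing inequality (Lemma~\ref{lemma4} here). Since $\mathbf Z$ is independent of $(\mathbf U,\mathbf X)$ we have the Markov chain $\mathbf U \to \mathbf X \to \mathbf X+\mathbf Z$, so
\[
\mathbf J(\mathbf X+\mathbf Z\mid \mathbf U)\ \preceq\ \mathbf J(\mathbf X+\mathbf Z\mid \mathbf X)\ =\ \mathbf J(\mathbf Z)\ =\ \mathbf\Sigma^{-1},
\]
which after inversion yields $\mathbf J(\mathbf X+\mathbf Z\mid\mathbf U)^{-1}\succeq\mathbf\Sigma$ without needing the explicit score formula or the MMSE decomposition. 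Your approach has the advantage of producing the exact identity $\mathbf J=\mathbf\Sigma^{-1}-\mathbf\Sigma^{-1}\mathrm{mmse}(\mathbf X\mid\mathbf Y,\mathbf U)\mathbf\Sigma^{-1}$, which is strictly more information; the data-processing route just gets to the inequality faster and reuses a lemma already on the table.
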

\begin{lemma}\cite{ekrem2011secrecy}\label{lemma8}
Suppose $(\mathbf U,\mathbf X)$ is a random vector pair with arbitrary joint distribution and the second order moment of $\mathbf X$ satisfies $E(\mathbf{XX}^T)\preceq S$. Let $\mathbf{Z_1,Z_2}$ be random Gaussian vectors that are independent from $\mathbf U$ and $\mathbf X$ and have mean zero and covariance matrices $\mathbf{\Sigma_1\succeq\Sigma_2}$.
Then we have
\begin{flalign}
\mathbf J(\mathbf X+\mathbf Z_1|\mathbf U)^{-1}+\mathbf \Sigma_2-\mathbf \Sigma_1-\mathbf J(\mathbf X+\mathbf Z_2|\mathbf U)^{-1}\succeq\mathbf 0.
\end{flalign}
\end{lemma}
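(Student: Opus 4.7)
The plan is to rewrite Lemma \ref{lemma8} as the equivalent monotonicity statement
\begin{equation*}
\mathbf{J}(\mathbf{X}+\mathbf{Z}_1|\mathbf{U})^{-1}-\mathbf{\Sigma}_1 \;\succeq\; \mathbf{J}(\mathbf{X}+\mathbf{Z}_2|\mathbf{U})^{-1}-\mathbf{\Sigma}_2,
\end{equation*}
which asserts that the ``excess'' conditional inverse Fisher information $\mathbf{J}(\mathbf{X}+\mathbf{Z}|\mathbf{U})^{-1}-\mathbf{\Sigma}_{\mathbf{Z}}$ is nondecreasing in the noise covariance $\mathbf{\Sigma}_{\mathbf{Z}}$ in the PSD order. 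Since $\mathbf{\Sigma}_1\succeq\mathbf{\Sigma}_2$, I would first couple the two noise vectors by writing $\mathbf{Z}_1\stackrel{d}{=}\mathbf{Z}_2+\mathbf{W}$ with $\mathbf{W}\sim\mathcal{N}(\mathbf{0},\mathbf{\Delta})$, $\mathbf{\Delta}=\mathbf{\Sigma}_1-\mathbf{\Sigma}_2\succeq\mathbf{0}$, independent of $(\mathbf{U},\mathbf{X},\mathbf{Z}_2)$. Setting $\mathbf{X}'=\mathbf{X}+\mathbf{Z}_2$, this reduces the claim to
\begin{equation*}
\mathbf{J}(\mathbf{X}'+\mathbf{W}|\mathbf{U})^{-1}-\mathbf{J}(\mathbf{X}'|\mathbf{U})^{-1}\;\succeq\;\mathbf{\Delta}.
\end{equation*}

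To establish this inequality, I would interpolate parametrically: let $\mathbf{Y}(t)=\mathbf{X}'+\sqrt{t}\,\mathbf{W}'$ for $t\in[0,1]$ with $\mathbf{W}'\sim\mathcal{N}(\mathbf{0},\mathbf{\Delta})$ independent of everything else, and define $\mathbf{M}(t)=\mathbf{J}(\mathbf{Y}(t)|\mathbf{U})^{-1}-t\mathbf{\Delta}$. Since $\mathbf{Y}(0)=\mathbf{X}'$ and $\mathbf{Y}(1)\stackrel{d}{=}\mathbf{X}'+\mathbf{W}$, it suffices to verify $\tfrac{d}{dt}\mathbf{M}(t)\succeq\mathbf{0}$ on $(0,1)$, since integration then yields $\mathbf{M}(1)-\mathbf{M}(0)\succeq\mathbf{0}$. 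To compute $\tfrac{d}{dt}\mathbf{J}(\mathbf{Y}(t)|\mathbf{U})^{-1}$, I would invoke the conditional Brown/I-MMSE identity
\begin{equation*}
\mathbf{J}(\mathbf{Y}(t)|\mathbf{U})=(t\mathbf{\Delta})^{-1}-(t\mathbf{\Delta})^{-1}\,\mathrm{mmse}(\mathbf{X}'|\mathbf{Y}(t),\mathbf{U})\,(t\mathbf{\Delta})^{-1},
\end{equation*}
after which routine matrix algebra should collapse the PSD sign of $\tfrac{d}{dt}\mathbf{M}(t)$ to a quadratic form in the (manifestly PSD) conditional MMSE matrix $\mathrm{mmse}(\mathbf{X}'|\mathbf{Y}(t),\mathbf{U})$.

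The principal obstacle is the matrix calculus behind that derivative, together with the apparent singularity of Brown's identity at $t\to 0$, where the added noise vanishes. Both are resolved by the techniques developed in \cite{ekrem2011secrecy}, from which Lemma \ref{lemma8} is directly imported; since our converse for Theorem \ref{MIMO} uses Lemma \ref{lemma8} only as a black-box tool, we may appeal to that reference. As an alternative route that sidesteps the $t\to 0$ limit, I would also attempt to deduce the inequality from a conditional matrix Fisher information inequality applied to the independent sum $\mathbf{X}'+\mathbf{W}$ given $\mathbf{U}$, leveraging the Gaussianity of $\mathbf{W}$ to upgrade the bare FII bound to the sharper gap $\mathbf{\Delta}$ on the right-hand side.
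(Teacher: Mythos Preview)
Your proposal is correct and fully compatible with the paper's treatment. The paper does not give an independent argument at all: its entire proof of Lemma~\ref{lemma8} is the single sentence ``follows the arguments in the proof of Lemma~6 in \cite{ekrem2011secrecy} for the unconditional case, but using Corollary~4 in \cite{ekrem2011secrecy} for the conditional case.'' Your sketch---coupling $\mathbf{Z}_1\stackrel{d}{=}\mathbf{Z}_2+\mathbf{W}$, interpolating with $\mathbf{Y}(t)=\mathbf{X}'+\sqrt{t}\,\mathbf{W}'$, and differentiating $\mathbf{J}(\mathbf{Y}(t)|\mathbf{U})^{-1}-t\mathbf{\Delta}$ via the conditional Brown/MMSE identity---is precisely the style of argument that underlies that cited lemma, so you are effectively reconstructing what the paper imports as a black box. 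Your explicit fallback ``we may appeal to that reference'' therefore coincides exactly with the paper's own proof. One minor caution: your Brown identity presumes $\mathbf{\Delta}=\mathbf{\Sigma}_1-\mathbf{\Sigma}_2\succ\mathbf{0}$, so a routine limiting or subspace argument is needed when $\mathbf{\Delta}$ is merely positive semidefinite; this is standard and does not affect the validity of the approach.
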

The proof of Lemma \ref{lemma8} follows the arguments in the proof of Lemma 6 in \cite{ekrem2011secrecy} for the unconditional case, but using Corollary 4 in \cite{ekrem2011secrecy} for the conditional case.


\subsection{Main Proof}

Following the converse proof of Theorem \ref{th:capadmc} in Appendix \ref{conv_dmc}, we have the inequalities as follows:
\begin{equation}\label{bd1:dmc}
\begin{split}
R_1&\leq I(\mathbf U_1;\mathbf Y_1),\\
R_k&\leq I(\mathbf U_k;\mathbf Y_k|\mathbf U_{k-1})-I(\mathbf U_k;\mathbf Y_{k-1}|\mathbf U_{k-1}), \quad\text{for } 2\leq k\leq K,
\end{split}
\end{equation}
where the random variables satisfy the Markov chain condition in \eqref{Markov}.

We first derive the bounds on $R_2$ and $R_3$ in order to show that the bounding techniques can be extended to prove the bounds on $R_4,\ldots,R_K$. We then derive the bound on $R_1$.

To bound $R_2$, we start with \eqref{bd1:dmc}, and have
\begin{flalign}\label{bdr2}
        R_2&\leq I(\mathbf U_2;\mathbf Y_2|\mathbf U_1)-I(\mathbf U_2;\mathbf Y_1|\mathbf U_1)\nn\\
        &\overset{(a)}{=}h(\mathbf Y_2|\mathbf U_1)-h(\mathbf Y_2|\mathbf U_2)-(h(\mathbf Y_1|\mathbf U_1)-h(\mathbf Y_1|\mathbf U_2))\nn\\
        &=(h(\mathbf Y_2|\mathbf U_1)-h(\mathbf Y_1|\mathbf U_1))-(h(\mathbf Y_2|\mathbf U_2)-h(\mathbf Y_1|\mathbf U_2)),
\end{flalign}
where (a) follows from the Markov chain condition in \eqref{Markov}.

Following from Lemma \ref{lemma2011}, we have the following upper bound and lower bound on $h(\mathbf Y_2|\mathbf U_1)-h(\mathbf Y_1|\mathbf U_1)$.
\begin{flalign}\label{eq:bound}
\frac{1}{2}\log\frac{|\mathbf J(\mathbf X+\mathbf Z_2|\mathbf U_1)^{-1}|}{|\mathbf J(\mathbf X+\mathbf Z_2|\mathbf U_1)^{-1}+\mathbf\Sigma_1-\mathbf\Sigma_2|}&\leq h(\mathbf Y_2|\mathbf U_1)-h(\mathbf Y_1|\mathbf U_1)\nn\\
    &\leq \frac{1}{2}\log\frac{|\mathbf J(\mathbf X+\mathbf Z_1|\mathbf U_1)^{-1}+\mathbf\Sigma_2-\mathbf\Sigma_1|}{|\mathbf{J}(\mathbf X +\mathbf Z_1|\mathbf U_1)^{-1}|}.
\end{flalign}
Define
\begin{flalign}
\mathbf A&=\mathbf J(\mathbf X+\mathbf Z_2|\mathbf U_1)^{-1},\nn\\
\mathbf B&=\mathbf \Sigma_1-\mathbf \Sigma_2,\nn\\
\mathbf \Delta&=\mathbf J(\mathbf X+\mathbf Z_1|\mathbf U_1)^{-1}+\mathbf \Sigma_2-\mathbf \Sigma_1-\mathbf J(\mathbf X+\mathbf Z_2|\mathbf U_1)^{-1},\nn
\end{flalign}
and
\[r(t)=\frac{1}{2}\log\frac{|\mathbf A+\mathbf B+t\mathbf \Delta|}{|\mathbf A+t\mathbf \Delta|}.\]
Therefore, \eqref{eq:bound} can be rewritten into,
\begin{equation}\label{eq84}
-r(0)\leq h(\mathbf Y_2|\mathbf U_1)-h(\mathbf Y_1|\mathbf U_1)\leq -r(1).
\end{equation}
It can be verified that $\mathbf A \succ 0$, $\mathbf B\succeq0$, and $\mathbf \Delta\succeq0$.
In particular, $\mathbf \Delta\succeq0$ is due to Lemma \ref{lemma8}.

Following from Lemma \ref{continuous}, $r(t)$ is a continuous and monotonically decreasing function in $t$. Hence,  \eqref{eq84} implies that there must exist a constant $t_1$ with $0\leq t_1\leq 1$, such that
\begin{equation}
 h(\mathbf Y_2|\mathbf U_1)-h(\mathbf Y_1|\mathbf U_1)=-r(t_1).
\end{equation}
We define
\begin{flalign}
\mathbf S_1:&=\mathbf A+t_1\mathbf\Delta-\mathbf \Sigma_2\nn\\
&=\mathbf J(\mathbf X+\mathbf Z_2|U_1)^{-1}+t_1(\mathbf J(\mathbf X+\mathbf Z_1|\mathbf U_1)^{-1}+\mathbf \Sigma_2-\mathbf \Sigma_1-\mathbf J(\mathbf X+\mathbf Z_2|\mathbf U_1)^{-1})-\mathbf \Sigma_2.
\end{flalign}
Therefore,
\begin{equation}\label{r2part1}
h(\mathbf Y_2|\mathbf U_1)-h(\mathbf Y_1|\mathbf U_1)=-r(t_1)=\frac{1}{2}\log\frac{|\mathbf S_1+\mathbf \Sigma_2|}{|\mathbf S_1+\mathbf \Sigma_1|}.
\end{equation}
It can be seen that $\mathbf S_1$ satisfies $\mathbf A-\mathbf \Sigma_2\preceq \mathbf S_1\preceq \mathbf A+\mathbf \Delta-\mathbf \Sigma_2$. Following Lemma \ref{lemaup}, we have
\begin{equation}
    \mathbf 0\preceq \mathbf J(\mathbf X+\mathbf Z_2|\mathbf U_1)^{-1}-\mathbf \Sigma_2=\mathbf A-\mathbf \Sigma_2\preceq \mathbf S_1\preceq\mathbf A+\mathbf \Delta-\mathbf \Sigma_2=\mathbf J(\mathbf X+\mathbf Z_1|\mathbf U_1)^{-1}-\mathbf \Sigma_1\preceq\mathbf S,\\
\end{equation}
which implies
\begin{equation}
    \mathbf 0\preceq \mathbf S_1\preceq \mathbf S.
\end{equation}
We next study the term $h(\mathbf Y_2|\mathbf U_2)-h(\mathbf Y_1|\mathbf U_2)$. Due to the Markov chain condition \eqref{Markov}, it is clear that $-I(U_2;Y_2|U_1)+I(U_2;Y_1|U_1)\leq 0$, which implies that
\begin{equation}\label{part1}
    h(\mathbf Y_2|\mathbf U_2)-h(\mathbf Y_1|\mathbf U_2) \leq h(\mathbf Y_2|\mathbf U_1)-h(\mathbf Y_1|\mathbf U_1)=\frac{1}{2}\log\frac{|\mathbf S_1+\mathbf \Sigma_2|}{|\mathbf S_1+\mathbf \Sigma_1|}.
\end{equation}
Applying Lemma \ref{lemma2011}, we obtain
\begin{flalign}\label{part2}
\frac{1}{2}\log\frac{|\mathbf{J}(\mathbf X +\mathbf Z_2|\mathbf U_2)^{-1}|}{|\mathbf J(\mathbf X+\mathbf Z_2|\mathbf U_2)^{-1}+\mathbf\Sigma_1-\mathbf\Sigma_2|}&\leq h(\mathbf Y_2|\mathbf U_2)-h(Y_1|\mathbf U_2)\nn\\
    &\leq \frac{1}{2}\log\frac{|\mathbf J(\mathbf X+\mathbf Z_2|\mathbf U_2)^{-1}+\mathbf\Sigma_2-\mathbf\Sigma_1|}{|\mathbf J(\mathbf X+\mathbf Z_1|\mathbf U_2)^{-1}|}.
\end{flalign}
Combining \eqref{part1} and\eqref{part2}, we have
\begin{equation}\label{ulb}
    \frac{1}{2}\log\frac{|\mathbf{J}(\mathbf X +\mathbf Z_2|\mathbf U_2)^{-1}|}{|\mathbf J(\mathbf X+\mathbf Z_2|\mathbf U_2)^{-1}+\mathbf\Sigma_1-\mathbf\Sigma_2|}\leq h(\mathbf Y_2|\mathbf U_2)-h(Y_1|\mathbf U_2)\leq\frac{1}{2}\log\frac{|\mathbf S_1+\mathbf \Sigma_2|}{|\mathbf S_1+\mathbf \Sigma_1|}.
\end{equation}

We now consider the function
\begin{equation}
    r(t)=\frac{1}{2}\log\frac{|\mathbf A+\mathbf B+t\mathbf \Delta|}{|\mathbf A+t\mathbf \Delta|}
\end{equation}
with $\mathbf A$, $\mathbf B$ and $\mathbf \Delta$ being redefined as,
\begin{equation}
\begin{split}
    \mathbf A&=\mathbf J(\mathbf X+\mathbf Z_2|\mathbf U_2)^{-1}\\
    \mathbf B&=\mathbf \Sigma_1-\mathbf \Sigma_2\\
    \mathbf \Delta&=\mathbf S_1+\mathbf \Sigma_2-\mathbf J(\mathbf X+\mathbf Z_2|\mathbf U_2)^{-1},
\end{split}
\end{equation}
where $\mathbf A \succ 0$, $\mathbf B\succeq0$, and $\mathbf \Delta\succeq0$.
In order to show $\mathbf \Delta\succeq\mathbf 0$, we show that $\mathbf S_1\succeq \mathbf J(\mathbf X+\mathbf Z_2|\mathbf U_2)^{-1}-\mathbf \Sigma_2$. Using Lemma \ref{lemma4}, we have
\begin{equation}
    \mathbf J(\mathbf X+\mathbf Z_2|\mathbf U_2)\succeq \mathbf J(\mathbf X+\mathbf Z_2|\mathbf U_1).
\end{equation}
Hence,
\begin{equation}
    \mathbf J(\mathbf X+\mathbf Z_2|\mathbf U_1)^{-1}\succeq \mathbf J(\mathbf X+\mathbf Z_2|\mathbf U_2)^{-1}.
\end{equation}
Since $\mathbf S_1\succeq \mathbf J(\mathbf X+\mathbf Z_2|\mathbf U_1)^{-1}-\mathbf \Sigma_2$, we have
\begin{equation}\label{delta}
    \mathbf S_1\succeq \mathbf J(\mathbf X+\mathbf Z_2|\mathbf U_2)^{-1}-\mathbf \Sigma_2.
\end{equation}
Thus, \eqref{ulb} can be rewritten as
\begin{equation}
    -r(0)\leq h(\mathbf Y_2|\mathbf U_2)-h(\mathbf Y_1|\mathbf U_2)\leq -r(1).
\end{equation}
Since the function $r(t)$ is monotone and continuous, there exists a constant $t_2$ with $0\leq t_2\leq 1$ such that  $h(\mathbf Y_2|\mathbf U_2)-h(\mathbf Y_1|\mathbf U_2)=-r(t_2)$.
Let $\mathbf S_2=\mathbf A+t_2\mathbf \Delta-\mathbf \Sigma_2$, and obtain
\begin{equation}\label{r2part2}
    h(\mathbf Y_2|\mathbf U_2)-h(\mathbf Y_1|\mathbf U_2)=-r(t_2)=\frac{1}{2}\log\frac{|\mathbf S_2+\mathbf \Sigma_2|}{|\mathbf S_2+\mathbf \Sigma_1|}
\end{equation}
It can be seen that
\begin{equation}
    \mathbf J(\mathbf X+\mathbf Z_2|\mathbf U_2)^{-1}-\mathbf \Sigma_2=\mathbf A-\mathbf\Sigma_2\preceq\mathbf S_2\preceq\mathbf A+\mathbf \Delta-\mathbf\Sigma_2=\mathbf S_1.
\end{equation}
Therefore, combining \eqref{r2part1} and \eqref{r2part2}, we obtain
\begin{equation}
\begin{split}
    R_2&\leq \frac{1}{2}\log\frac{|\mathbf S_1+\mathbf \Sigma_2|}{|\mathbf S_1+\mathbf \Sigma_1|}-\frac{1}{2}\log\frac{|\mathbf S_2+\mathbf \Sigma_2|}{|\mathbf S_2+\mathbf \Sigma_1|}\\
    &=\frac{1}{2}\log\frac{|\mathbf S_1+\mathbf \Sigma_2|}{|\mathbf S_2+\mathbf \Sigma_2|}-\frac{1}{2}\log\frac{|\mathbf S_1+\mathbf \Sigma_1|}{|\mathbf S_2+\mathbf \Sigma_1|}.
\end{split}
\end{equation}

We next derive an upper bound on $R_3$, which is a necessary step to show that the proof techniques can be iteratively extended to bound $R_4,\ldots,R_K$. Following from \eqref{bd1:dmc}, we have
\begin{equation}\label{eq:r3}
    R_3\leq h(\mathbf Y_3|\mathbf U_2)-h(\mathbf Y_2|\mathbf U_2)-(h(\mathbf Y_3|\mathbf U_3)-h(\mathbf Y_2|\mathbf U_3)).
\end{equation}
Using Lemma \ref{thm11} and \eqref{r2part2}, we obtain
\begin{equation}\label{temp1}
    h(\mathbf Y_3|\mathbf U_2)-h(\mathbf Y_2|\mathbf U_2)\leq\frac{1}{2}\log\frac{|\mathbf S_2+\mathbf \Sigma_3|}{|\mathbf S_2+\mathbf \Sigma_2|}.
\end{equation}
Similarly to \eqref{part1}, due to the Markov chain condition \eqref{Markov}, we have
\begin{equation}\label{temp2}
    h(\mathbf Y_3|\mathbf U_3)-h(\mathbf Y_2|\mathbf U_3)\leq h(\mathbf Y_3|\mathbf U_2)-h(\mathbf Y_2|\mathbf U_2).
\end{equation}

Using Lemma \ref{lemma2011} and \eqref{temp1} and \eqref{temp2}, we have,
\begin{equation}
    \frac{1}{2}\log\frac{|\mathbf J(\mathbf X+\mathbf Z_3|\mathbf U_3)^{-1}|}{|\mathbf J(\mathbf X+\mathbf Z_3|\mathbf U_3)^{-1}+\mathbf \Sigma_2-\mathbf \Sigma_3|}\leq h(\mathbf Y_3|\mathbf U_3)-h(\mathbf Y_2|\mathbf U_3)\leq \frac{1}{2}\log\frac{|\mathbf S_2+\mathbf \Sigma_3|}{|\mathbf S_2+\mathbf \Sigma_2|}.
\end{equation}
It can be shown that $\mathbf S_2\succeq \mathbf J(\mathbf X+\mathbf Z_3|\mathbf U_3)^{-1}-\mathbf \Sigma_3$ by using Lemma \ref{lemma4} and Lemma \ref{lemma8}.
Then following the similar arguments that yield \eqref{r2part2}, we can show that there exists an $\mathbf S_3$, such that $\mathbf 0\preceq\mathbf S_3\preceq\mathbf S_2\preceq \mathbf S_1\preceq \mathbf S$ and
\begin{equation}\label{bd2r3}
    h(\mathbf Y_3|\mathbf U_3)-h(\mathbf Y_2|\mathbf U_3)=\frac{1}{2}\log\frac{|\mathbf S_3+\mathbf \Sigma_3|}{|\mathbf S_3+\mathbf \Sigma_2|}.
\end{equation}
Therefore, substituting \eqref{temp1} and \eqref{bd2r3} into \eqref{eq:r3}, we obtain
\begin{equation}
\begin{split}
    R_3&\leq \frac{1}{2}\log\frac{|\mathbf S_2+\mathbf \Sigma_3|}{|\mathbf S_2+\mathbf \Sigma_2|}-\frac{1}{2}\log\frac{|\mathbf S_3+\mathbf \Sigma_3|}{|\mathbf S_3+\mathbf \Sigma_2|}\\
    &=\frac{1}{2}\log\frac{|\mathbf S_2+\mathbf \Sigma_3|}{|\mathbf S_3+\mathbf \Sigma_3|}-\frac{1}{2}\log\frac{|\mathbf S_2+\mathbf \Sigma_2|}{|\mathbf S_3+\mathbf \Sigma_2|}.
\end{split}
\end{equation}
Using techniques similar to those for bounding $R_2$ and $R_3$, we can derive the desired bounds on $R_4,\ldots,R_K$ iteratively.

Finally, we bound the rate $R_1$.
We introduce a virtual receiver $\mathbf Y_0=\mathbf X+\mathbf Z_0$, where $\mathbf Z_0$ is a Gaussian vector with the covariance matrix of $\mathbf \Sigma_0=t\mathbf \Sigma_1$ with $t\geq 1$.
Hence, $\mathbf \Sigma_0\succeq\mathbf \Sigma_1$.
Following from \eqref{r2part1} and Lemma \ref{thm11},
%
we have,
\begin{equation}\label{ieq666}
    h(\mathbf Y_0|\mathbf U_1)-h(\mathbf Y_1|U_1)\leq \frac{1}{2}\log\frac{|\mathbf S_1+\mathbf \Sigma_0|}{|\mathbf S_1+\mathbf \Sigma_1|},
\end{equation}
for any $t\geq1$.
On the other hand, we have
\begin{equation}
    \frac{1}{2}\log(2\pi e)^r|\mathbf \Sigma_0|=h(\mathbf Z_0)\leq h(\mathbf Y_0|\mathbf U_1)\leq h(\mathbf Y_0)\leq \frac{1}{2}\log(2\pi e)^r|\mathbf S+\mathbf \Sigma_0|,
\end{equation}
which implies that
\begin{equation}
    \frac{1}{2}\log\frac{|\mathbf \Sigma_0|}{|\mathbf S_1+\mathbf \Sigma_0|}\leq h(\mathbf Y_0|\mathbf U_1)-\frac{1}{2}\log(2\pi e)^r|\mathbf S_1+\mathbf \Sigma_0|\leq \frac{1}{2}\log\frac{|\mathbf S+\mathbf \Sigma_0|}{|\mathbf S_1+\mathbf \Sigma_0|}.
\end{equation}
As $t\rightarrow \infty$,  $\frac{1}{2}\log\frac{|\mathbf \Sigma_0|}{|\mathbf S_1+\mathbf \Sigma_0|}\rightarrow 0$ and $\frac{1}{2}\log\frac{|\mathbf S+\mathbf \Sigma_0|}{|\mathbf S_1+\mathbf \Sigma_0|}\rightarrow0$. Hence, $h(\mathbf Y_0|\mathbf U_1)-\frac{1}{2}\log(2\pi e)^r|\mathbf S_1+\mathbf \Sigma_0|\rightarrow0$ as $t\rightarrow\infty$.
Since \eqref{ieq666} holds for any $t\geq1$, we have $h(\mathbf Y_1|\mathbf U_1)\geq \frac{1}{2}\log(2\pi e)^r|\mathbf S_1+\mathbf \Sigma_1|$.

Following from \eqref{bd1:dmc},
\begin{equation}
\begin{split}
R_1&\leq I(\mathbf U_1;\mathbf Y_1)\\
&=h(\mathbf Y_1)-h(\mathbf Y_1|\mathbf U_1)\\
&\leq \frac{1}{2}\log(2\pi e)^r|\mathbf S+\mathbf \Sigma_1|-\frac{1}{2}\log(2\pi e)^r|\mathbf S_1+\mathbf \Sigma_1|\\
&=\frac{1}{2}\log\frac{|\mathbf S+\mathbf \Sigma_1|}{|\mathbf S_1+\mathbf \Sigma_1|},
\end{split}
\end{equation}
which completes the proof.

\renewcommand{\baselinestretch}{1}
\begin{small}

\bibliographystyle{unsrt}
\bibliography{bibfile}

\end{small}

%

\end{document}